\newcommand{\A}{{\mathcal{A}}}
\newcommand{\B}{{\mathcal{B}}}
\newcommand{\one}{\textbf{1}}
\newcommand{\E}{{\mathbb{E}}}
\newcommand{\var}{{\rm{Var}}}
\newcommand{\cov}{{\rm{Cov}}}
\newcommand{\triang}{{\rm \textsc{Triangles}}}
\newcommand{\dist}{{\rm \textsc{Dist}}}
\newcommand{\disj}{{\rm \textsc{Disj}}}
\newcommand{\ind}{{\rm \textsc{Index}}}
\newcommand{\x}{{\bf x}}
\newcommand{\y}{{\bf y}}
\newcommand{\G}{{\mathcal{G}}}
\title{How Hard is Counting Triangles in the Streaming Model}
\author{
Vladimir Braverman \inst{1}, Rafail Ostrovsky \inst{2} and Dan Vilenchik \inst{3}}
\institute{Department of Computer Science, Johns Hopkins University.
\\\email{vova@cs.jhu.edu}   \and Department of Computer Science, UCLA.\\ \email{rafail@cs.ucla.edu} \and Faculty of Mathematics and Computer Science, The Weizmann Institute, Israel.
\\\email{dan.vilenchik@weizmann.ac.il}}
\begin{document}

\maketitle

\begin{abstract}
The problem of (approximately) counting the number of triangles in a graph is one of the basic problems in graph theory. In this paper we study the problem in the streaming model. We study the amount of memory required by a randomized algorithm to solve this problem. In case the algorithm is allowed one pass over the stream, we present a best possible lower bound of $\Omega(m)$ for graphs $G$ with $m$ edges on $n$ vertices. If a constant number of passes is allowed, we show a lower bound of $\Omega(m/T)$, $T$ the number of triangles. We match, in some sense, this lower bound with a 2-pass $O(m/T^{1/3})$-memory algorithm that solves the problem of distinguishing graphs with no triangles from graphs with at least $T$ triangles. We present a new graph parameter $\rho(G)$ -- the triangle density, and conjecture that the space complexity of the triangles problem is $\Omega(m/\rho(G))$. We match this by a second algorithm that solves the distinguishing problem using $O(m/\rho(G))$-memory.
\end{abstract}

\newpage

\section{Introduction}
Counting the number of triangles in a graph $G=(V,E)$ is one of the basic algorithmic questions in graph theory.
From a theoretical point of view, the key question is to determine the time and space complexity of the problem. The brute-force approach enumerates all possible triples of nodes (taking $O(n^3)$ time, $n$ is the number of vertices in $G$). The algorithms with the lowest time complexity for counting triangles rely on fast matrix multiplication. The asymptotically fastest algorithm to date
is $O(n^{2.376})$ \cite{CoppersmithWinograd}. An algorithm that runs in time $O(m^{1.41})$ with space complexity $\Theta(n^2)$ is given in
\cite{AlonYuster} ($m$ is the number of edges in $G$). In more practical applications, the number of triangles is a frequently used network statistic in the exponential random graph model \cite{RinaldoFienberg,FrankStrauss}, and naturally
appears in models of real-world network evolution \cite{LeskovecBackstrom}, and web applications \cite{BecchettiBoldi,EckmannMoses}. In the context of social networks, triangles have a natural interpretation: friends of friends tend to
be friends \cite{WassermanFaust}, and this can be used in link recommendation/friend suggestion \cite{TsourakakisDrineas}.

The memory restrictions when dealing with huge graphs led to consider the streaming model: The edges of the graph come down the stream, and the algorithm processes each edge as it comes in an on-line fashion (once it moves down the stream, it cannot access it again). The algorithm is allowed ideally one pass (or a limited number of passes) over the stream. The parameter of interest is the amount of memory that the algorithm uses to solve the problem.
\begin{definition} $\triang(c)$ is the problem of approximating the number of triangles in the input graph within a multiplicative factor of 0.9, with probability at least $2/3$, using at most $c$ passes over the data stream.
\end{definition}
The choice of constants $0.9$ and $2/3$ in the definition is for the sake of clear and simple presentation. One can take both the approximation rate and success probability to be parameters of the problem.

Currently, no non-trivial algorithms are known to solve $\triang(c)$ when $c$ is constant (by trivial we mean an algorithm that uses $\Theta(m)$ memory, $m$ the number of edges, which is asymptotically the same as storing all graph edges). All existing approximation algorithms receive $T_3$ (the number of triangles in the input graph) as part of their input. Obviously, $T_3$ cannot be part of the input. One way to implement such an algorithm is by ``guessing" the correct value of $T_3$ and verifying the guess. This translates into $\Theta(\log n)$ passes over the stream (in form of a binary search for example). Let us mention some of the known results. One approach is a sampling approach. The algorithm suggested in~\cite{Buriol:2006} samples in the first pass $s$ random pairs $(e,v)$ of an edge $e=(u,w)$ and a vertex $v$, and stores them. Then in the second pass checks for every pair $(e,v)$ whether $(u,v,w)$ is a triangle. The total number of triangles is estimated as a function of the number of pairs $(e,v)$ that formed a triangle. The number of samples $s$ is proportional to $(T_0+T_1+T_2)/T_3$ (where $T_i = \#$ of vertex triples in the graph spanning exactly $i$ edges). They show a more sophisticated sampling algorithm which uses $(T_1+T_2)/T_3$ samples.  A different approach reduces the problem of approximating the number of triangles to the problem of estimating the frequency moments of the data stream, using the Alon-Matias-Szegedy (AMS) algorithm~\cite{AMS99}. This approach was presented in~\cite{Bar-Yossef:2002} for the first time. The algorithm in~\cite{Bar-Yossef:2002} uses $T_1,T_2,T_3$ to compute the appropriate parameters with which to run AMS. The space complexity of this algorithm is proportional to $((T_1+T_2)/T_3)^3$. In another work~\cite{Jowhari05newstreaming}, the algorithm uses $m,C_4,C_6,T_3$ to compute the appropriate parameters to AMS ($C_i$ is the number of $i$-cycles in the graph). The space complexity of that algorithm is $(m^3 + mC_4 + C_6 + T_3^2)/T_3^2$.

Observe that all aforementioned algorithms assume other non-trivial graph-parameters to be part of their input as well. Another disadvantage that the aforementioned algorithms share, is the fact that their space complexity depends on parameters that are not necessarily indicative of the number of triangles in the graph. For example, the parameter $T_2$ may have little to do with the number of triangles $T_3$ in some graphs. Consider the graph whose vertex set $V=A_0 \cup A_1 \cup A_2$, each $A_i$ of size $n/3$. The edge set $E$ is the complete bi-partite graph on $A_0,A_1$ and on $A_1,A_2$. Clearly, $G=(V,E)$ has no triangles, so $T_3=0$, but has $T_2=\Theta(n^3)$ paths of length two.  In light of what we've just said, two interesting questions arise

\begin{description}
\item[Question 1:] Determine the space complexity of $\triang(1)$ and \\ $\triang(O(1))$.
\end{description}

\begin{description}
\item[Question 2:] Is there an algorithm that solves $\triang(c)$, whose space complexity depends only on the number of edges and $T_3$?
\end{description}

Bar-Yossef~et.~al.~\cite{Bar-Yossef:2002} showed that the space complexity required to solve $\triang(1)$ is $\Omega(n^2)$ (throughout we disregard the memory it takes to represent a single graph vertex). Specifically, they showed that every one-pass $0.5$-approximation algorithm that succeeds with probability $0.99$, is as good as the trivial algorithm that stores all edges and exhaustively computes the number of triangles. While the lower bound determines the space complexity in the worst case, it is informative to study more refined notions of  ``worst case". For example, what is the space complexity of $\triang(1)$ when the graph has exactly $m$ edges, or at least $T$ triangles. What is the space complexity of $\triang(2)$, when two passes over the stream are allowed, rather than one. In those cases the lower bound in~\cite{Bar-Yossef:2002} is irrelevant.

\subsection{Our Results}

In this paper we show that the space complexity of any algorithm that solves $\triang(1)$  (i.e. in \emph{one} pass) is $\Omega(m)$. This lower bound is asymptotically \textbf{tight}, since the trivial algorithm that stores all edges of the graph uses that much memory. Furthermore, the lower bound is true even when assuming that the graph has $T_3=O(n)$ triangles. Clearly one cannot expect this to be the case for every value of $T_3$, since when $T_3 = \Theta(n^3)$ for example, a straightforward sampling algorithms solves $\triang(1)$ using $O(1)$-space. Formally,

\begin{theorem}\label{thm:LowerBound1} $\exists c_1,c_2 > 0$ s.t. the space complexity of $\triang(1)$ is $\Omega(m)$, when the input is an $n$-vertex graph with $m \in [c_1n,c_2n^2]$ edges. Furthermore, this lower bound is true even if the graph has as many as $0.99n$ triangles.
\end{theorem}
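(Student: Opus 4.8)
The standard route for streaming space lower bounds is a reduction from a two‑party communication complexity problem. Since we want $\Omega(m)$ for a one‑pass algorithm, and the hard instances should have few triangles, the natural candidate is the $\disj$ (set disjointness) problem, whose randomized communication complexity is $\Omega(k)$ on universe size $k$, even with bounded error. Actually, let me think about which communication problem gives the cleanest reduction here, and whether I need to be careful about the triangle count constraint $T_3 \le 0.99n$.

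Let me reconsider. The key constraint is that the hard graph has $m = \Theta(n)$ or up to $\Theta(n^2)$ edges but at most $0.99n$ triangles, and yet requires $\Omega(m)$ space. For $m = \Theta(n)$ this means $\Omega(n)$ space — roughly storing a linear-size set. $\disj$ on universe $[n]$ gives exactly this. So the plan: Alice holds $S \subseteq [n]$, Bob holds $\bar{S}$-ish structure, we build a graph where a triangle appears iff the sets intersect.

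**Key steps.**

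\begin{proof}[Proof plan]
The plan is to reduce the two-party set-disjointness problem $\disj_k$ — whose randomized communication complexity with error $1/3$ is $\Omega(k)$ — to a one-pass streaming algorithm for $\triang(1)$, in a way that keeps the number of triangles in the constructed graph at most $0.99n$. First I would fix a gadget that converts a coordinate of the $\disj$ instance into a potential triangle. The cleanest choice: create three groups of vertices $A = \{a_1,\dots,a_k\}$, $B = \{b_1,\dots,b_k\}$, and a single shared vertex $w$ (or a small number of shared vertices), together with the fixed ``base'' edges $\{(a_i, w)\}$ and $\{(b_i, w)\}$ for all $i$. Alice, holding $x \in \{0,1\}^k$, streams the edges $\{(a_i, b_i) : x_i = 1\}$; Bob, holding $y \in \{0,1\}^k$, contributes nothing new to the edge set in the simplest version — instead we let the potential triangle on index $i$ be $(a_i, b_i, w)$ and use a second shared vertex to encode $y$. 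Concretely, use two hubs $w_A, w_B$: Alice streams $(a_i,b_i)$ for $i \in x$, and also all edges $(a_i, w_A), (b_i, w_A)$; Bob streams $(a_i, w_B), (b_i, w_B)$ for $i \in y$. Then the triple $(a_i, b_i, w_A)$ is a triangle iff $x_i = 1$, and $(a_i, b_i, w_B)$ is a triangle iff $x_i = y_i = 1$. To make triangle count track only the intersection, I would instead have Bob stream the edges $(a_i,b_i)$ and Alice the hub edges only for coordinates in her set, so a triangle $(a_i, b_i, w)$ appears precisely when $i \in x \cap y$. This graph has $n = \Theta(k)$ vertices, $m = \Theta(k)$ edges, and exactly $|x \cap y|$ triangles; choosing $\disj$ instances from the standard hard distribution where $|x \cap y| \in \{0,1\}$ keeps $T_3 \le 1 \le 0.99n$.

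Next I would run the purported $\triang(1)$ algorithm on the stream formed by concatenating Alice's edges then Bob's edges: Alice processes her portion, sends the algorithm's memory state to Bob, Bob finishes the stream and reads off a $0.9$-approximation to the triangle count, which distinguishes $0$ from $\ge 1$ triangles and hence solves $\disj_k$. Therefore the memory state has size $\Omega(k) = \Omega(m)$, giving the lower bound for $m = \Theta(n)$.

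To extend the bound to all $m \in [c_1 n, c_2 n^2]$, I would pad: take the $\disj_k$-based construction on $k = \Theta(m/n')$ ``active'' vertices and blow up each active vertex into a cluster, or more simply add a disjoint dense bipartite-like graph on the remaining vertices carrying the desired number of extra edges but no new triangles (as in the $A_0, A_1, A_2$ complete bipartite example from the introduction, which is triangle-free). This lets me hit any target $m$ in the stated range while keeping $T_3$ small and forcing $\Omega(m)$ communication, because the $\disj$ instance still sits inside the stream and must be solved. Finally, to realize the ``as many as $0.99n$ triangles'' clause, I would additionally glue in a fixed triangle-rich gadget on a separate vertex set of size $o(n)$ contributing exactly $0.99n$ triangles regardless of $(x,y)$; since the gadget is independent of the inputs, an approximation algorithm still cannot avoid resolving the $\pm 1$ ambiguity coming from $x \cap y$ once we instead amplify to an additive-constant-fraction gap — here I would replace single-coordinate $\disj$ by a direct sum / multiple independent copies so the triangle count from the intersection part is itself $\Theta(n)$, comparable to the fixed $0.99n$, making the multiplicative $0.9$ approximation informative. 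The main obstacle I anticipate is exactly this last point: simultaneously (i) keeping $T_3 \le 0.99 n$, (ii) making the triangle count gap large enough that a $0.9$-multiplicative approximation resolves it, and (iii) not letting the padding or the triangle-rich gadget inflate $m$ beyond the intended value or introduce spurious triangles; balancing these three constraints is where the careful parameter bookkeeping lives, and it likely forces using a direct sum of $\disj$ instances together with a precisely tuned triangle-free padding graph rather than the one-coordinate gadget sketched above.
\end{proof}
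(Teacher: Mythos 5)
Your basic gadget (a hub vertex $w$, Alice holding the hub edges for coordinates in $x$, Bob holding the edges $(a_i,b_i)$ for coordinates in $y$, so that a triangle appears at coordinate $i$ iff $i\in x\cap y$) is sound and does give $\Omega(n)$ space for the case $m=\Theta(n)$ with at most one triangle; using $\disj$ here is fine, though for a one-pass bound the simpler $\ind$ problem already suffices and is what the paper uses. The genuine gap is in your extension to the full range $m\in[c_1n,c_2n^2]$: padding with a disjoint triangle-free dense graph does not increase the lower bound. The bound you inherit from the reduction is $\Omega(k)$ where $k$ is the universe size of the embedded communication instance, and in your construction $k=\Theta(n)$ (one coordinate per pair $(a_i,b_i)$), so after padding you still only get $\Omega(n)=o(m)$ whenever $m=\omega(n)$. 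To obtain $\Omega(m)$ the communication instance itself must carry $\Theta(m)$ bits of information. The paper achieves this by reducing from $\ind_p$ with $p=\Theta(m)$: Alice's vector encodes an arbitrary bipartite graph between vertex sets $X$ and $Y$ with $|X|\cdot|Y|=p$ potential edges, and Bob, who holds a single index $\ell$ corresponding to a potential edge $(x_i,y_j)$, attaches $T=0.99n$ apex vertices to both $x_i$ and $y_j$. The resulting graph has exactly $0.99n$ triangles if $\x[\ell]=1$ and none otherwise, so any $0.9$-approximation recovers $\x[\ell]$ and Alice's message (the memory state) must have $\Omega(p)=\Omega(m)$ bits.

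Note that this construction simultaneously disposes of the ``$0.99n$ triangles'' clause, which is the second weak point of your plan: the $0.99n$ triangles \emph{are} the signal rather than a fixed decoration. Your proposal of gluing in an input-independent gadget with $0.99n$ triangles and then taking a direct sum of $\disj$ copies to create a $\Theta(n)$ additive gap runs into exactly the balancing problem you flag yourself (the YES instances would then exceed $0.99n$ triangles, and the fixed gadget contributes nothing to the hardness); it is also unnecessary once Bob's side of the graph is allowed to depend on his input, as in the $\ind$ reduction.
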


Theorem \ref{thm:LowerBound1} extends the aforementioned result in~\cite{Bar-Yossef:2002} in two aspects: the number of edges is asymptotically the entire range (compared with $\Theta(n^2)$ in~\cite{Bar-Yossef:2002}). The graph may contain as many as a linear number of triangles (compared with one triangle in~\cite{Bar-Yossef:2002}). In addition, our proof technique is conceptually and technically simpler.

Improving upon the currently best known lower bound of $\Omega(n/T_3)$ ($n$ is the number of vertices) for $\triang(O(1))$~\cite{Jowhari05newstreaming}, we show that:
\begin{theorem}\label{thm:LowerBound2}  The space complexity of $\triang(O(1))$ for input graphs with $m$ edges and $T_3$ triangles is $\Omega(m/\max\{T_3,1\})$.
\end{theorem}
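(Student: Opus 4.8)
The plan is to obtain the lower bound by a reduction from set-disjointness, whose $\eps$-error randomized communication complexity is $\Omega(N)$ for \emph{any} number of rounds. The key point is that, unlike one-way primitives such as $\ind$, this hardness survives arbitrarily many interaction rounds, so a single reduction handles every constant number of passes at once. Concretely, I will encode an instance of $\disj_N$ into an $n$-vertex graph with $\Theta(m)$ edges and either $0$ or $\Theta(T_3)$ triangles, in such a way that $N=\Theta(m/\max\{T_3,1\})$; then any $c$-pass, $s$-space algorithm for $\triang(c)$ becomes a $\disj_N$ protocol of cost $O(cs)$, forcing $s=\Omega(N/c)=\Omega(m/\max\{T_3,1\})$ for $c=O(1)$.

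Here is the gadget. Write $T=\max\{T_3,1\}$ and $r=\lceil\sqrt{T}\,\rceil$. For each coordinate $i\in[N]$ create a private copy of the complete tripartite graph $K_{r,r,1}$: two independent sets $U_i,V_i$ of size $r$ and a single apex vertex $p_i$, all vertex sets pairwise disjoint over the coordinates. Given a (balanced, unique-intersection) $\disj_N$ instance with Alice holding $x$ and Bob holding $y$, Alice inserts the $r^2$ edges of the complete bipartite graph between $U_i$ and $V_i$ iff $x_i=1$, and then Bob inserts the $2r$ edges from $p_i$ to $U_i\cup V_i$ iff $y_i=1$. Every triangle of the resulting graph has the form $(u,v,p_i)$ with $u\in U_i,v\in V_i$ — and is present iff $x_i=y_i=1$ — while vertex-disjointness of the copies rules out any cross-coordinate triangle; hence the number of triangles is exactly $r^2\cdot|x\cap y|\in\{0,r^2\}=\{0,\Theta(T_3)\}$. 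Since $|x|,|y|=\Theta(N)$, the edge count is $\Theta(Nr^2)=\Theta(NT)$ and the vertex count is $\Theta(Nr)=\Theta(N\sqrt{T})$; choosing $N=\Theta(m/T)$ makes the number of edges $\Theta(m)$ (pad with an isolated path if an exact count is wanted). Note that this family satisfies $m=\Theta(n\sqrt{T_3})$, so $\Omega(m/T_3)=\Omega(n/\sqrt{T_3})$ genuinely improves on the previously known $\Omega(n/T_3)$ bound.

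The streaming-to-communication step is routine. Order the stream as Alice's edges followed by Bob's edges; Alice runs the claimed algorithm on her block, ships the $s$-bit memory state to Bob, he continues, ships the state back, and so on, one full exchange per pass, after which one party outputs. Using public coins for the algorithm's randomness, this is a randomized protocol for $\disj_N$ with communication $(2c-1)s+O(\log n)$. Because a $0.9$-approximation to the triangle count that errs with probability at most $1/3$ distinguishes ``$0$ triangles'' from ``$\Theta(T_3)$ triangles'' with probability $2/3$, the protocol decides $\disj_N$; the $\Omega(N)$ lower bound for disjointness then gives $(2c-1)s=\Omega(N)$, i.e. $s=\Omega(N/c)=\Omega\bigl(m/(c\max\{T_3,1\})\bigr)$, which is $\Omega(m/\max\{T_3,1\})$ for $c=O(1)$.

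The conceptual content is concentrated in the gadget: it must be simultaneously tight in \emph{two} parameters — it should realize exactly $\Theta(T_3)$ triangles yet cost only $O(T_3)$ edges per coordinate, since any inflation of the per-coordinate edge budget (for instance attaching a large triangle-free bipartite layer) would merely weaken the bound to $\Omega(m/T_3^2)$ or worse. Getting the Alice/Bob split right so that a lone intersecting coordinate is the \emph{only} source of triangles, and recalling that balanced unique-disjointness still carries the $\Omega(N)$ communication lower bound against many-round protocols, are the two places where care is needed; the rest is bookkeeping.
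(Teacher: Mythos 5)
Your proof is correct, and it shares the paper's high-level strategy (a reduction from multi-round set disjointness, which is what makes the bound hold for any constant number of passes), but the construction is genuinely different. The paper uses a single shared gadget: a tripartite graph $A\cup B\cup C$ with a perfect matching between $A$ and $C$, in which the number of triangles equals the inner product $\sum_i \x_i\y_i$ (Lemma~\ref{lem:ReductionNumTriang}); the factor $T_3$ in the denominator is then extracted on the \emph{communication} side, by introducing the promise problem $\disj_p^{r,t}$ (intersection empty or of size at least $t$) and proving its complexity is $\Omega(r/t)$ by concatenating $t$ copies of a smaller instance (Lemma~\ref{lem:ComplxtyOfDisj}, resting on the H{\aa}stad--Wigderson bound). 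You instead keep the communication problem vanilla --- unique-intersection disjointness with its round-independent $\Omega(N)$ bound --- and push the blow-up into the \emph{graph}: each coordinate becomes a private $K_{r,r,1}$ with $r=\lceil\sqrt{T}\,\rceil$, so a single intersecting coordinate already yields $r^2=\Theta(T_3)$ triangles at a cost of $\Theta(T_3)$ edges, and vertex-disjointness of the blocks kills cross-coordinate triangles. The two constructions are dual (the paper pads the disjointness instance, you pad the gadget), and both end up with $\Theta(m/T_3)$ effective coordinates. Your route avoids any separate lemma about a promise variant of disjointness, at the price of invoking the (standard, but external) fact that the $\Omega(N)$ bound survives the unique-intersection promise; it also yields the nice side observation that $m=\Theta(n\sqrt{T_3})$ on your hard family, so the bound reads $\Omega(n/\sqrt{T_3})$. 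The paper's single shared gadget, by contrast, certifies hardness over a wider range of edge densities relative to $n$ (up to $m=\Theta(n^2)$), whereas your family is locked to $m=\Theta(n\sqrt{T_3})$. Two cosmetic points: your YES instances have $\lceil\sqrt{T_3}\,\rceil^2$ rather than exactly $T_3$ triangles (harmless for an asymptotic statement, and trivially adjustable), and since a $0.9$-multiplicative approximation of $0$ must be $0$, the reduction in fact only needs the YES instances to contain \emph{some} triangle --- the $\Theta(T_3)$ count matters only so that the hard instances belong to the stated graph class.
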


Turing to the algorithmic part, the lower bound for $\triang(1)$ is asymptotically tight. Giving a non-trivial upper bound for $\triang(O(1))$ seems to be beyond the reach of current algorithmic techniques. As we already mentioned, all current state-of-the-art approximation algorithms require a super-constant number of passes (regardless of the space complexity). Hence, we start with a softer notion of approximation, in the spirit of property testing.

\begin{definition} $\dist(c)$ is the following problem. Given two graph families: $\G_1$ consisting of triangle-free graphs, $\G_2$ consisting of graphs with at least $T$ triangles, and an input graph $G \in \G_1 \cup \G_2$, decide whether $G \in \G_1$ or $G \in \G_2$ with probability at least $2/3$, using at most $c$ passes over the input.
\end{definition}
The same lower bounds that we derive for $\triang(c)$ are true for $\dist(c)$ as well (therefore there is nothing interesting to say algorithmically about $\dist(1)$). None of the aforementioned approximation algorithms solve $\dist(O(1))$, since they require additional parameters that are not available to the algorithm (for example, $T_2$, the number of triples spanning exactly two edges). We describe an algorithm that solves $\dist(2)$ using $O(m/T^{1/3})$ bits of memory. This answers Question 2 above, for the problem $\dist$.

We now turn to describe in details our algorithm, and formally state the relevant theorem. We assume that the parameter $T$ is known to the algorithm (as it is part of the problem definition).

\medskip

\begin{figure*}[!htp]
\begin{center}
\fbox{\normalsize
\begin{minipage}{\textwidth}{\textbf{Algorithm $A$}}
\begin{description}
  \item[Output:] `1' iff a triangle was found.
  \item[Pass 1]
  \item[(a)] Set $m' = 6m/T^{1/3}$, and $p = m'/m$.
  \item[(b)]  Store every edge $e$ with probability $p$. If more than $5m'$ edges are stored, output FAIL.
  \item[(c)]  Let $H$ be the graph stored by the algorithm at the end of $\rm\textbf{(b)}$. Search for a triangle in $H$, if found output 1.
  \item[Pass 2] For every edge $e$, check whether $e$ completes a triangle in $H$. Output 1 iff such edge exists.
\end{description}
\end{minipage}
}
\end{center}
\end{figure*}
\begin{theorem}\label{thm:alg1} For $T \geq 216$, Algorithm $A$ solves $\dist(2)$ using at most $30m/T^{1/3}$ bits of memory.
\end{theorem}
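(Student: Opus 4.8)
The plan is to verify three things separately — the memory bound, correctness on triangle-free inputs, and correctness on inputs with at least $T$ triangles — only the last being substantial. The memory bound is immediate: by construction the algorithm keeps at most $5m'=30m/T^{1/3}$ edges (it reports FAIL the moment it would exceed this), the second pass adds only $O(1)$ working space, and vertex labels are free in this model. For a triangle-free $G$ the sampled graph $H\subseteq G$ is triangle-free, and no stream edge $e$ can complete a triangle two of whose edges lie in $H\subseteq G$; hence the algorithm outputs $0$ unless it FAILs. Since the number of stored edges is $\mathrm{Bin}(m,p)$ with mean $m'$, Markov gives $\Pr[\mathrm{FAIL}]\le 1/5$ — and in fact, since an $m$-edge graph has at most $O(m^{3/2})$ triangles (Kruskal–Katona), a graph with $T_3$ triangles has $m=\Omega(T_3^{2/3})$, so $pm=\Omega(T^{1/3})$ and a Chernoff bound makes $\Pr[\mathrm{FAIL}]$ negligible, which is convenient below. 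Either way the algorithm is correct with probability $\ge 1-\Pr[\mathrm{FAIL}]>2/3$ on triangle-free inputs.

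Now suppose $G$ has $T_3\ge T$ triangles. The key reformulation is that the algorithm outputs $1$ precisely when some triangle of $G$ has at least two of its edges in $H$: Pass~1(c) covers the (subsumed) ``all three in $H$'' case, and Pass~2 catches a triangle with exactly two edges in $H$ upon seeing its third edge in the stream. Call an unordered pair of adjacent edges whose three endpoints span a triangle of $G$ a \emph{cherry}; then the algorithm outputs $1$ iff $H$ contains a cherry. There are exactly $3T_3$ cherries, each lying in $H$ with probability $p^2$, so writing $Y$ for the number of cherries contained in $H$ it suffices to show $\Pr[Y=0]+\Pr[\mathrm{FAIL}]\le 1/3$, i.e.\ essentially $\Pr[Y\ge 1]\ge 2/3$.

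To lower-bound $\Pr[Y\ge 1]$ I would split on $D:=\max_e d_e$, where $d_e$ is the number of triangles through edge $e$, against a threshold $\tau=\Theta(T^{2/3})$ fixed at the end. If $D\ge\tau$, take an edge $e=uv$ lying in $D$ triangles $uvw_1,\dots,uvw_D$; the cherries $\{uw_i,vw_i\}$, $i=1,\dots,D$, are pairwise edge-disjoint, hence the events ``$\{uw_i,vw_i\}\subseteq H$'' are independent, giving $\Pr[Y=0]\le(1-p^2)^D\le e^{-p^2 D}\le e^{-p^2\tau}$, a tiny constant since $p^2=36/T^{2/3}$. If $D<\tau$ I would run a second-moment (Paley–Zygmund) argument. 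Here $\E[Y]=3T_3p^2\ge 108\,T^{1/3}\ge 648$ for $T\ge 216$, and
\begin{equation*}
\E[Y^2]\ \le\ \E[Y]+p^3 N_1+(\E[Y])^2,
\end{equation*}
where $N_1$ is the number of ordered pairs of distinct cherries sharing an edge. Since each edge $f$ lies in exactly $2d_f$ cherries, $N_1\le 4\sum_f d_f^2\le 4D\sum_f d_f=12\,D\,T_3$, whence
\begin{equation*}
\Pr[Y\ge 1]\ \ge\ \frac{(\E[Y])^2}{\E[Y^2]}\ \ge\ \frac{1}{1+\dfrac{1}{\E[Y]}+\dfrac{12D}{9T_3 p}}\ \ge\ \frac{1}{1+\dfrac{1}{648}+\dfrac{2D}{9T^{2/3}}},
\end{equation*}
using $p=6T^{-1/3}$ and $T_3\ge T$. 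Choosing $\tau$ a small enough constant times $T^{2/3}$ — with the negligible-FAIL estimate $\tau=2T^{2/3}$ already works, while with only $\Pr[\mathrm{FAIL}]\le 1/5$ one needs $\tau\approx 0.68\,T^{2/3}$ (keeping $e^{-p^2\tau}$ tiny in the first case either way) — makes both cases give $\Pr[Y=0]+\Pr[\mathrm{FAIL}]<1/3$, completing the proof.

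The main obstacle is the $D<\tau$ case. Distinct cherries through a common edge are positively correlated, and any crude bound on that correlation breaks exactly when $G$ has a heavy edge (e.g.\ a large ``book'' graph); peeling such an edge into the first case is precisely what repairs this, since its $D$ pairwise edge-disjoint cherries supply the needed near-independence for free. A related pitfall is that the tempting ``many edge-disjoint triangles'' union bound for the $D<\tau$ case degrades as $T\to\infty$, whereas the second-moment bound stays effective there because $\E[Y]\to\infty$. What remains is a short but genuinely tight constant check against the prescribed values $p=6/T^{1/3}$ and FAIL threshold $5m'$; invoking the Kruskal–Katona bound $m=\Omega(T_3^{2/3})$ to make FAIL negligible is worth it, as it leaves comfortable room for the choice of $\tau$.
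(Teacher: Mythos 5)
Your proof is correct and follows essentially the same route as the paper's: the same case split on the maximum number of triangles through a single edge (the paper's ``towers''), independence of the edge-disjoint floors/cherries over a heavy edge in one case, and a second-moment bound driven by the same count of edge-sharing pairs (your $N_1\le 12DT_3$ versus the paper's $\pi(G)\le 3T_3h/2$) in the other. The only cosmetic differences are that you apply the second moment to the number of cherries rather than to the number of triangles fully contained in $H$ (which enlarges the mean and handles both passes at once, whereas the paper's Chebyshev step only uses Pass-1 detection in that case), and that you fall back on Markov rather than Chernoff for the FAIL event, which is why your final constants are so tight.
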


\noindent \textbf{Remark 1.} When $T =\omega(1)$, our algorithm solves $\dist(2)$ using \emph{sub-linear} space. Also, our lower bound on the space complexity of $\dist(1)$, together with Algorithm A for $\dist(2)$, imply a  space complexity separation result between one-pass and two-passes. For example, $\dist(2)$ can be solved in space $O(m/n)=o(m)$ for graphs with $T_3=n/2$ triangles and $m$ edges, while $\dist(1)$ requires $\Omega(m)$-space for such graphs.

\noindent \textbf{Remark 2.} Algorithm $A$ assumes $m$ is given. This assumption is done only for the sake of clear and simple presentation, and can be easily removed: The algorithm ``guesses" an initial value for $m$, say $m_1=1$. This value is used to define $p$ for the first $m_1$ edges. If the number of edges exceeds that guess, then the algorithm sets $m_2=2m_1$, and updates $p$ accordingly for the next $m_2$ edges. Every time guess $i$ is exceeded, the algorithm sets $m_{i+1}=2m_i$. The last interval will consist of the last $m/2$ edges. Edges are still stored independently of each other, and in expectation twice as many edges are stored. Storing more edges may only help the algorithm (while not changing the asymptotic space complexity). Hence the same analysis that we have for $A$ goes through with this additional procedure.

\subsection{A new graph parameter}\label{sec:NewGraphParam}
While the bound given in Theorem \ref{thm:LowerBound1} for $\triang(1)$ is asymptotically tight, we suspect that the bound in Theorem \ref{thm:LowerBound2} for $\triang(O(1))$ is not tight, and conjecture a tight bound instead. Define the \textbf{triangle density} of a graph $G$, $\rho(G)$, to be the number of vertices that belong to some triangle in $G$. It is easy to see that $\binom{(6T)^{1/3}}{3} \leq \rho(G) \leq 3T$ (a clique or $T$ disjoint triangles).

\medskip

\noindent \textbf{Conjecture:} The space complexity of $\triang(O(1))$ and $\dist(O(1))$ is $\Omega(m/\rho(G))$.

\medskip

The lower bound in Theorem \ref{thm:LowerBound2} is consistent with the case $\rho(G)=\Theta(T)$, and Algorithm $A$ is consistent with $\rho(G)=\Theta(T^{1/3})$. We describe a second algorithm  that solves $\dist(2)$ using $O(m/\rho(G))$ space, thus showing that one cannot hope for a tighter bound than the one stated in the conjecture.  A formal description of the algorithm, a proof of correctness and analysis of its space complexity is given in Section \ref{apnd:A2}.

\subsection{Techniques}
Theorem \ref{thm:LowerBound1} is proven via a reduction from the index problem in communication complexity, and for Theorem \ref{thm:LowerBound2} we use a reduction from a variant of the set disjointness problem. The idea behind the proof of Theorem \ref{thm:alg1} is as follows. Consider the following natural and well-known graph sparsification procedure. Given a graph $G$, store every edge, independently of the others, with probability $p$. Let $H$ be the sparsified version of $G$. If $G$ has at least $T$ triangles, the expected number of triangles in $H$ is $p^3 T$. Taking $p = \Omega(T^{-1/3})$, the expected number of triangles in $H$ is $\Omega(1)$, and the number of edges in $H$ is $O(m/T^{1/3})$. The main question is how concentrated is that number? For example, think of two triangles sharing an edge. If this edge was not picked in $H$ then both triangles will not show up in $H$. This phenomenon may translate into a large variance in the number of triangles in $H$. To solve this problem, we identify the graph structure responsible for large variance. More concretely, we call $s$ triangles that share the same edge an \emph{$s$-tower}. For a carefully chosen number $s^*=s^*(p)$, one can show the following fact: If $G$ has no $s^*$-tower, then the variance is small and the number of triangles in $H$ is close to the expectation. If there is an $s^*$-tower, it is tall enough so that at least one floor survives (a floor is two edges that belong to the same triangle). In that case, in the second pass of $A$, the base of that tower is caught, and a triangle is detected.

The algorithm suggested in~\cite{TsourakakisKang} also uses the graph sparsification method. That algorithm computes the sparsified graph $H$, and checks if it contains a triangle. This approach does not allow control over the variance, and ultimately such an algorithm will fail unless it stores $\Theta(m)$ of the edges (think of the case when all triangles are stacked in one tower, then unless $p$ is constant, the base of the tower will be missing almost always). This is also consistent with the lower bound we have for $\triang(1)$ (as computing $H$ can be done in one pass). Our algorithm addresses this issue exactly by trying to either catch a triangle, or catch a floor (which forces the second pass).

\medskip

\noindent \textbf{Paper Organization.} We proceed with the description of our second algorithm mentioned in Section \ref{sec:NewGraphParam}. The proof of Theorem \ref{thm:alg1} follows in Section \ref{sec:ProofAlg1}. The proofs of the lower bounds, Theorems \ref{thm:LowerBound1} and \ref{thm:LowerBound2}, use rather standard techniques, though require some new insights. Both are given in full in Appendix \ref{sec:ProfLowerBound2}.

\section{The second algorithm}\label{apnd:A2}
\begin{figure*}[!htp]
\begin{center}
\fbox{\normalsize
\begin{minipage}{\textwidth}{\textbf{Algorithm $A_2(G,T,\rho(G))$}}
\begin{description}
  \item[Output:] 1 if a triangle is detected, 0 otherwise.

  \item[Pass 1]
  \item[(a)] Sample $4n/\rho(G)$ vertices, uniformly at random. Let $S \subseteq V$ be that set.

  \item[(b)] Store all edges in the steam that touch the set $S$.

  \item[Pass 2] Check for every edge $e$ if it completes a triangle with any of the stored edges. Output 1 iff such edge exists.
\end{description}
\end{minipage}
}
\end{center}
\end{figure*}
\begin{theorem}\label{thm:alg2} Algorithm $A_2$ solves $\dist(2)$ using $O(m/\rho(G))$ bits of memory in expectation.
\end{theorem}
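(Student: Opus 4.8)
The plan is to verify separately the expected space bound and correctness on each of the two families; the whole argument is short. It is convenient to first dispose of the degenerate regime. If $\rho(G)\le 4$ (in particular, if $G$ is triangle-free, where we read the sample size as $0$), then $4n/\rho(G)\ge n$, so $S=V$, the stored graph $H$ equals $G$, and the algorithm stores all $m$ edges and detects every triangle of $G$; since $\rho(G)=O(1)$ here, $m=O(m/\rho(G))$ and there is nothing further to prove. So assume henceforth $|S| = 4n/\rho(G) < n$.

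For the memory bound, $S$ is a uniformly random $|S|$-subset of $V$, so for a fixed vertex $u$ we have $\Pr[u\in S] = |S|/n = 4/\rho(G)$. An edge $e=(u,v)$ is stored in Pass~1 precisely when it touches $S$, i.e. when $u\in S$ or $v\in S$, so a union bound gives $\Pr[e\text{ stored}]\le 8/\rho(G)$. By linearity of expectation, $\E[|E(H)|]\le 8m/\rho(G)$; charging $O(1)$ bits per stored edge (disregarding the cost of naming a vertex, as throughout the paper) and noting that Pass~2 needs only $H$ plus $O(1)$ working space, the expected memory is $O(m/\rho(G))$.

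For correctness, the triangle-free case is immediate: if $G\in\G_1$, then no edge of $G$ can, together with two stored edges, close a triangle (that would be a triangle of $G$), so the algorithm outputs $0$ with probability $1$, regardless of its coin tosses. Now take $G\in\G_2$ and let $U\subseteq V$ be the set of vertices that lie on at least one triangle of $G$, so that $|U|=\rho(G)$. The crux is that $S\cap U\ne\emptyset$ forces output $1$: pick $v\in S\cap U$ and a triangle $\{v,x,y\}$ through $v$; both $(v,x)$ and $(v,y)$ touch $S$ and hence lie in $H$, so when $(x,y)$ arrives in Pass~2 it closes a triangle with two stored edges. Finally, since $S$ is a uniform subset of size $4n/\rho(G)$, sampling without replacement yields
\[
\Pr\bigl[S\cap U=\emptyset\bigr] \;=\; \binom{n-\rho(G)}{4n/\rho(G)}\Big/\binom{n}{4n/\rho(G)} \;\le\; \Bigl(1-\tfrac{\rho(G)}{n}\Bigr)^{4n/\rho(G)} \;\le\; e^{-4} \;<\; \tfrac{1}{3},
\]
so the algorithm is correct with probability at least $1-e^{-4}>2/3$.

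I do not expect a real obstacle: the argument is essentially a first-moment calculation plus a union bound. The two points that need a moment's attention are (i) the degenerate case $\rho(G)=O(1)$, handled above so that the stated sample size is meaningful, and (ii) the fact that the space guarantee is only in expectation --- Algorithm $A_2$ has no abort safeguard, so obtaining a worst-case or high-probability bound would require adding one, which is harmless because the endpoint-membership indicators are negatively associated under sampling without replacement and $|E(H)|$ therefore concentrates about its mean.
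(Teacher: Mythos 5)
Your proof is correct and follows essentially the same route as the paper: correctness reduces to the event that $S$ hits the set of triangle vertices, whose complement is bounded by $(1-\rho(G)/n)^{4n/\rho(G)}\le e^{-4}$ (your exact hypergeometric computation and the paper's negative-correlation argument are the same bound), and the expected memory is $8m/\rho(G)$ by linearity (you count per edge with a union bound, the paper sums the expected degrees $2m/n$ of the sampled vertices). Your extra remarks on the degenerate case $\rho(G)\le 4$ and the expectation-only nature of the space bound are reasonable observations that the paper glosses over, but they do not change the argument.
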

\begin{proof}
Let $Z \subset V$ be the set of vertices in $G$ that belong to some triangle. In our notation, the size of $Z$ is $\rho(G)$. The algorithm never fails if there are no triangles in $G$. Therefore let us consider the case where there are triangles in $G$.

The algorithm $A_2$ fails only if $S \cap Z = \emptyset$. Otherwise, $S$ contains a vertex $v$ that belongs to some triangle $\{v,u,w\}$, and in the first pass the algorithm stores all neighbors of $v$ (and in particular the edges $(v,u)$ and $(v,w)$). In the second pass the edge $(u,w)$ will be considered and $A_2$ will detect the triangle. Let us bound the probability of $S \cap Z = \emptyset$.
Let $A_i$ be the event that the $i^{th}$ vertex chosen to be in $S$ doesn't belong to $Z$. It is easy to see that the $A_i$'s are negatively correlated (as there is no replacement). For every $i$, $Pr[A_i]=1-\rho(G)/n$.
Therefore,
$$Pr[S \cap Z = \emptyset] = Pr[A_1 \wedge A_2 \wedge \dots A_{|S|}] \leq (1-\rho(G)/n)^{4n/\rho(G)} \leq e^{-4}.$$

Now let us compute the expected number of edges stored by $A_2$. For the $i^{th}$ vertex in $S$, let $D_i$ be a random variable counting the degree of that vertex in $G$. Since the $i^{th}$ vertex is a uniformly random vertex, $\E[D_i]=2m/n$ (the average degree in $G$). The expected number of edges touching $S$ is at most (using linearity of expectation)
$$\E\left[\sum_{i=1}^{|S|} D_i\right] = \sum_{i=1}^{|S|} \E\left[D_i\right] = (4n/\rho(G))(2m/n)=8m/\rho(G).$$
\end{proof}

\section{Proof of Theorem \ref{thm:alg1}}\label{sec:ProofAlg1}
We denote by $B(n,p)$ the binomial random variable with parameters $n$ and $p$, and expectation $\mu=np$. We shall use the following variant of the Chernoff bound, whose proof can be found in \cite[p. 21]{JRL2000}. Let $\varphi(x) = (1+x)\ln (1+x) - x$.
\begin{theorem}\label{Thm:Chernoff2} If $X \sim B(n,p)$ and $t \geq 0$ is some number, then
\[ Pr\big(X \geq \mu + t) \leq e^{-\mu  \varphi(t/\mu)}, \qquad Pr\big(X \leq \mu - t) \leq e^{-\mu  \varphi(-t/\mu)}.\]
\end{theorem}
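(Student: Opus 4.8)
The plan is to prove both tail bounds by the standard exponential-moment (Cram\'er--Chernoff) method, choosing the free exponential parameter optimally so that the rate function $\varphi$ emerges exactly. I begin with the upper tail. For any $\lambda > 0$, Markov's inequality applied to the nonnegative random variable $e^{\lambda X}$ gives
\[ Pr(X \geq \mu + t) = Pr\big(e^{\lambda X} \geq e^{\lambda(\mu+t)}\big) \leq e^{-\lambda(\mu+t)}\,\E\big[e^{\lambda X}\big]. \]
The key computation is the moment generating function of $X \sim B(n,p)$: writing $X$ as a sum of $n$ independent Bernoulli$(p)$ variables and using independence,
\[ \E\big[e^{\lambda X}\big] = \big(1 + p(e^{\lambda}-1)\big)^n. \]
I would then use the elementary inequality $1 + u \leq e^u$ with $u = p(e^{\lambda}-1)$ to bound this by $e^{np(e^{\lambda}-1)} = e^{\mu(e^{\lambda}-1)}$, which yields $Pr(X \geq \mu + t) \leq \exp\big(\mu(e^{\lambda}-1) - \lambda(\mu+t)\big)$.

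The next step is to optimize the exponent over $\lambda > 0$. Differentiating $\mu(e^{\lambda}-1) - \lambda(\mu+t)$ and setting the derivative to zero gives $e^{\lambda} = 1 + t/\mu$, that is $\lambda^\star = \ln(1 + t/\mu)$, which is positive whenever $t > 0$ (the case $t = 0$ being trivial since $\varphi(0)=0$). Substituting $\lambda^\star$ back, the first term becomes $\mu(t/\mu) = t$ and the second becomes $\mu(1 + t/\mu)\ln(1 + t/\mu)$, so the whole exponent collapses to $-\mu\big[(1 + t/\mu)\ln(1 + t/\mu) - t/\mu\big] = -\mu\,\varphi(t/\mu)$, which is precisely the claimed upper-tail bound.

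For the lower tail I would run the mirror-image argument with a negative exponent: for $\lambda > 0$, Markov applied to $e^{-\lambda X}$ gives $Pr(X \leq \mu - t) \leq e^{\lambda(\mu - t)}\,\E\big[e^{-\lambda X}\big] \leq \exp\big(\mu(e^{-\lambda}-1) + \lambda(\mu - t)\big)$, reusing the same moment generating function estimate with $\lambda$ replaced by $-\lambda$. Optimizing gives $e^{-\lambda^\star} = 1 - t/\mu$, and substituting back exactly as before reduces the exponent to $-\mu\,\varphi(-t/\mu)$. The one genuine subtlety, and the step I would treat most carefully, is the range of validity: the minimizer $\lambda^\star = -\ln(1 - t/\mu)$ is well defined and positive only for $0 < t < \mu$. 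When $t > \mu$ the event $\{X \leq \mu - t\}$ has a negative threshold and is empty, so its probability is $0$ and the bound is vacuous; when $t = \mu$ it reduces to $\{X = 0\}$, and one checks directly that $Pr(X = 0) = (1-p)^n \leq e^{-pn} = e^{-\mu} = e^{-\mu\varphi(-1)}$, since $\varphi(-1) = 1$. Apart from this boundary bookkeeping, both statements follow from the single moment generating function estimate together with its elementary convex optimization in $\lambda$.
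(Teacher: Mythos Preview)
Your argument is correct and is exactly the standard Cram\'er--Chernoff derivation of these bounds. Note, however, that the paper does not supply its own proof of this theorem: it is stated as a quoted result with a pointer to \cite[p.~21]{JRL2000}, so there is no in-paper argument to compare against. Your write-up is precisely the proof one finds in that reference (and in most probability texts): bound the moment generating function of $B(n,p)$ by $e^{\mu(e^{\lambda}-1)}$ via $1+u\le e^u$, optimize over $\lambda$, and identify the resulting rate with $\varphi$. Your handling of the boundary cases $t=\mu$ and $t>\mu$ in the lower tail is also the standard bookkeeping.
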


The algorithm $A$ always answers correctly if the graph $G$ has no triangles. Therefore, it suffices to bound the error probability when the graph $G$ has at least $T \geq 1$ triangles.

Let $H$ be the graph in which each edge of the stream is included with probability $p$. Let $\B_1$ be the event that Algorithm $A$ outputs FAIL or the wrong answer, $\B_2$ be the event that more than $5m'$ edges were stored in the first pass (causing the algorithm to output FAIL), and $\B_3$ the event that $H$ has no triangles, and no edge of the stream completes a triangle in $H$. Then
$$Pr[\B_1] \leq  Pr[\B_2] + Pr[\B_3 | \B_2^c]\leq Pr[\B_2] + Pr[\B_3] /Pr[\B_2^c].$$
(in the last inequality we used the fact that for two events $\A,\B$, $Pr[\A|\B] = Pr[\A \wedge \B]/Pr[\B] \leq Pr[\A]/Pr[\B]$).
The number of edges stored by $A$ is a binomial random variable with expectation $mp = m'$. In our case, $m' \geq 4$: we can assume w.l.o.g that $m \geq n/2$ (isolated vertices are never visible to the algorithm), and $T$ always satisfies $T \leq n^3/6$, therefore $m'=6m/T^{1/3} \geq 4$. Using Theorem \ref{Thm:Chernoff2}, the probability
of storing more than $5m'$ edges is at most $1/50$, hence $Pr[\B_2] \leq 1/50$. In turn,
$$Pr[\B_1] \leq \frac{1}{50} + \frac{49}{50}Pr[\B_3].$$
It suffices to show that $Pr[\B_3] \leq 0.3$, and then derive $Pr[\B_1] \leq 1/3$, as required.

We call $s$ triangles that share the same edge an \emph{$s$-tower}. Each pair of edges that belong to the same triangle is called a \emph{floor} in the tower. Let $T_3 \geq T$ be the number of triangles in $G$. For $p=m'/m=6/T^{1/3}$, let $\mu=p^3 T_3 = 216T_3/T$ be the expected number of triangles in $H$ and $\sigma^2$ the variance ($\sigma$ is the standard deviation).
\begin{lemma}\label{lem:noLargeTower} If $G$ contains no tower with more than $T_3^{2/3}$ floors, then $\sigma \leq 110(T_3/T)^{5/6}$.
\end{lemma}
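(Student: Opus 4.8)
The plan is to compute $\sigma^2 = \var[X]$ directly, where $X = \sum_{t} \one_t$ and the sum runs over all triangles $t$ in $G$, with $\one_t$ the indicator that all three edges of $t$ survive in $H$. Expanding,
$$\sigma^2 = \sum_{t,t'} \cov(\one_t, \one_{t'}) = \sum_{t,t'} \big(\E[\one_t \one_{t'}] - \E[\one_t]\E[\one_{t'}]\big).$$
Since edges are included independently, $\E[\one_t\one_{t'}] = p^{|t \cup t'|}$ where $|t\cup t'|$ is the number of distinct edges in the union, while $\E[\one_t]\E[\one_{t'}] = p^6$. The covariance vanishes unless $t$ and $t'$ share at least one edge, so the sum splits into two cases: (i) $t$ and $t'$ share exactly one edge ($|t\cup t'| = 5$), contributing $p^5 - p^6 \le p^5$ per ordered pair; (ii) $t = t'$ ($|t\cup t'| = 3$), contributing $p^3 - p^6 \le p^3$ per triangle. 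Two distinct triangles cannot share two edges. So
$$\sigma^2 \le p^3 T_3 + p^5 \cdot N,$$
where $N$ is the number of ordered pairs of distinct triangles sharing exactly one edge.

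**Bounding $N$ using the no-large-tower hypothesis.**

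The key step is bounding $N$. Group the pairs by their shared edge $e$: if $e$ lies in $\tau_e$ triangles, it contributes $\tau_e(\tau_e - 1) \le \tau_e^2$ ordered pairs. The hypothesis says $\tau_e \le T_3^{2/3}$ for every edge $e$ (an $s$-tower is exactly an edge in $s$ triangles). Hence
$$N \le \sum_e \tau_e^2 \le T_3^{2/3} \sum_e \tau_e = T_3^{2/3} \cdot 3T_3 = 3T_3^{5/3},$$
using $\sum_e \tau_e = 3T_3$ since each triangle is counted once per its three edges. Plugging in $p = 6/T^{1/3}$:
$$\sigma^2 \le \frac{216 T_3}{T} + \frac{6^5}{T^{5/3}} \cdot 3 T_3^{5/3} = \frac{216 T_3}{T} + 3\cdot 6^5 \left(\frac{T_3}{T}\right)^{5/3}.$$

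**Finishing.**

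Since $T_3 \ge T$, we have $T_3/T \ge 1$, so $(T_3/T) \le (T_3/T)^{5/3}$ and the first term is dominated: $\sigma^2 \le (216 + 3\cdot 6^5)(T_3/T)^{5/3} = 23544\,(T_3/T)^{5/3}$. Taking square roots, $\sigma \le \sqrt{23544}\,(T_3/T)^{5/6} \le 154\,(T_3/T)^{5/6}$. The claimed constant is $110$; to recover it one should be slightly more careful — e.g. use $p^5 - p^6 \le p^5$ but note $\tau_e(\tau_e-1)$ rather than $\tau_e^2$, observe $p$ is small so $p^5 - p^6$ is genuinely smaller, and possibly that the first term is actually negligible when $T_3/T$ is large while a sharper argument handles small ratios. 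I expect the only real obstacle is this constant-chasing; the structural heart of the argument — identifying shared-edge pairs as the sole source of covariance and controlling them via the tower bound $\sum_e \tau_e^2 \le (\max_e \tau_e)\sum_e \tau_e$ — is straightforward.
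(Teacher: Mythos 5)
Your proof is correct and follows essentially the same route as the paper: decompose the variance into diagonal terms plus covariances of edge-sharing pairs, note that only pairs sharing an edge contribute $p^5-p^6\le p^5$ each, and bound the number of such pairs by $(\max_e \tau_e)\cdot\sum_e\tau_e\le T_3^{2/3}\cdot 3T_3$ (your per-edge count $\sum_e\tau_e(\tau_e-1)$ is exactly the paper's Lemma~\ref{lem:CoutingPairs}, since a tower is just an edge together with the triangles containing it). On the constant: do not chase $110$ further --- your $154$ is the right answer for this argument. The paper writes $\sigma^2=\sum_i\var(\one_i)+\sum_{i<j}\cov(\one_i,\one_j)$, dropping the factor $2$ in front of the unordered covariance sum; restoring it turns the paper's $1.5T_3^{5/3}p^5$ into your $3T_3^{5/3}p^5$ and its $\sqrt{216+11664}\approx 110$ into your $\sqrt{216+23328}\approx 154$. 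So your ordered-pair bookkeeping is the correct one, and the stated constant $110$ is not attainable by this computation when $p$ is small (it only propagates into the constants of Proposition~\ref{prop:ProbNoTower}, where it could be absorbed by slightly increasing the sampling constant $6$ in $m'$).
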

Before we prove this proposition we need the following lemma.
\begin{lemma}\label{lem:CoutingPairs} Let $G$ be a graph with $T_3$ triangles, having no tower with more than $h$ floors. Let $\pi(G)$ be the number of pairs of triangles that share an edge. Then $\pi(G) \leq 3T_3h/2$.
\end{lemma}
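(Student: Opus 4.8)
The plan is a short double-counting argument. For an edge $e\in E$, write $d(e)$ for the number of triangles of $G$ containing $e$; equivalently, $d(e)$ is the height of the tower whose base is $e$, and an $s$-tower based at $e$ has exactly $s=d(e)$ floors. The first thing I would record is that two \emph{distinct} triangles can share at most one edge: if they shared two edges they would share all three vertices and coincide. Hence every unordered pair of triangles counted by $\pi(G)$ has a \emph{unique} shared edge, so classifying these pairs by that edge is a genuine partition and gives
$$\pi(G)=\sum_{e\in E}\binom{d(e)}{2}.$$

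Next I would plug in the hypothesis. Saying that $G$ has no tower with more than $h$ floors means exactly that $d(e)\le h$ for every edge $e$. Therefore $\binom{d(e)}{2}=\tfrac12 d(e)(d(e)-1)\le\tfrac12 d(e)\,h$ for each $e$, so
$$\pi(G)\le\frac{h}{2}\sum_{e\in E}d(e).$$

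Finally I would evaluate $\sum_{e\in E}d(e)$ by counting edge–triangle incidences in the other order: each triangle contains exactly three edges, so $\sum_{e\in E}d(e)=3T_3$. Substituting, $\pi(G)\le\tfrac{3T_3h}{2}$, as claimed.

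There is essentially no obstacle here; this is a one-paragraph counting lemma. The only two points that warrant a word of care are the observation that distinct triangles share at most one edge (so that the sum over $e$ of $\binom{d(e)}{2}$ counts each pair exactly once, with no over-counting) and the incidence identity $\sum_{e\in E}d(e)=3T_3$. If one wanted a marginally sharper constant one could keep $d(e)-1$ rather than bounding it by $h$, but $\tfrac{3T_3h}{2}$ already suffices for the way this lemma is used in Lemma~\ref{lem:noLargeTower}.
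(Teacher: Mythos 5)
Your proof is correct and is essentially the paper's own argument: the paper classifies pairs by the (unique) tower containing them and writes $\pi(G)=\sum_i a_i\binom{i}{2}$ with $a_i$ the number of towers with $i$ floors, which is just your $\sum_e\binom{d(e)}{2}$ reindexed, and then uses the same bound $\binom{i}{2}\le \tfrac{h}{2}\,i$ together with $\sum_i a_i i\le 3T_3$ (your incidence count $\sum_e d(e)=3T_3$).
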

\begin{proof}
Observe that every pair of triangles that share an edge belongs to exactly one tower: If the pair belongs to two towers, then the two triangles share two edges, but then they are the same triangle. Every pair belongs to at least one tower, since every such pair is a tower of height two. Therefore we can count the number of pairs sharing an edge, by counting the number of pairs of triangles in every tower.
Let $a_i$ be the number of towers with $i$ floors. Using this notation, $\pi(G) = \sum_{i=2}^{h}a_i \binom{i}{2}.$
Next observe that $\sum_{i=2}^h a_ii \leq 3T_3.$ The sum counts the number of triangles that belong to some tower, when every such triangle is accounted for at most three times (as it belongs to at most three different towers).
Finally, we have
$$\pi(G) = \sum_{i=2}^{h}a_i \binom{i}{2} \leq \frac{1}{2}\sum_{i=2}^{h}(a_ii)\cdot i \leq  \frac{1}{2}\sum_{i=2}^{h}(a_ii)\cdot h
= \frac{h}{2}\sum_{i=2}^{h}a_ii \leq 3T_3h/2.$$
\end{proof}
\begin{proof}(Lemma \ref{lem:noLargeTower})
Index the triangles in $G$ by $1,2,\ldots,T_3$. Let $\one_j$ be the indicator random variable which takes the value 1 if all three edges of triangle $j$ belong to $H$.
$$\E[\one_j]=p^3, \qquad \var[\one_j] = p^3(1-p^3) \leq p^3.$$
In these notations, $\sigma^2$ (the variance of the number of triangles in $G'$) is given by
$$\sigma^2 = \sum_{i=1}^{T_3} \var(\one_i) + \sum_{i < j} \cov(\one_i,\one_j), \qquad \sum_{i=1}^{T_3} \var(\one_i) \leq T_3 p^3 = \frac{216T_3}{T}.$$
For two triangles that share no edge, $\cov(\one_i,\one_j)=\E[\one_i\one_j]-\E[\one_i]\E[\one_j]=p^6-p^6=0.$
Therefore we only need to go over triangles that share an edge. For every such pair,
$\cov(\one_i,\one_j)=p^5-p^6 \leq p^5.$
By Lemma \ref{lem:CoutingPairs} with $h=T^{2/3}$, there are at most $1.5T_{3}^{5/3}$ pairs of triangle that share an edge. Hence,
$$ \sum_{i < j} \cov(\one_i,\one_j) \leq 1.5T_3^{5/3}p^5 = \frac{1.5\cdot 6^5 \cdot T_3^{5/3}}{T^{5/3}}\leq\left(\frac{278T_3}{T}\right)^{5/3}.$$
To summarize,
$$\sigma^2 \leq \frac{216T_3}{T}+\left(\frac{278T_3}{T}\right)^{5/3} \leq \left(\frac{282T_3}{T}\right)^{5/3}.$$
Taking the square root, we get the desired bound on $\sigma$.
\end{proof}
\begin{proposition}\label{prop:ProbNoTower} Conditioned on $G$ \emph{not} having a tower with more than $T_3^{2/3}$ floors, $Pr[\B_3] \leq 0.26$.
\end{proposition}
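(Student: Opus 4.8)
\textbf{Proof plan for Proposition~\ref{prop:ProbNoTower}.}
The plan is to show that, under the hypothesis that $G$ has no tower with more than $T_3^{2/3}$ floors, the number of triangles $X$ that survive into $H$ is concentrated around its mean $\mu = 216 T_3/T$, and in particular is nonzero with probability at least $0.74$; since $X \ge 1$ implies that $A$ detects a triangle already in the first pass (step (c)) and hence $\B_3$ does not occur, this gives $Pr[\B_3] \le 0.26$. First I would recall from the setup that $\E[X] = \mu = p^3 T_3 \ge p^3 T = 216$ (using $T_3 \ge T$ and $p = 6/T^{1/3}$), so the mean is comfortably large. Then I would invoke Lemma~\ref{lem:noLargeTower}, whose hypothesis is exactly the "no tall tower" assumption with $h = T^{2/3} \le T_3^{2/3}$, to get $\sigma \le 110 (T_3/T)^{5/6}$.

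The key step is then a second-moment / Chebyshev argument: $Pr[X = 0] \le Pr[|X - \mu| \ge \mu] \le \sigma^2/\mu^2$. So I would compute the ratio $\sigma^2/\mu^2$. Writing $r = T_3/T \ge 1$, we have $\mu = 216 r$ and $\sigma^2 \le 110^2 r^{5/3}$, so $\sigma^2/\mu^2 \le 110^2 r^{5/3} / (216^2 r^2) = (110/216)^2 r^{-1/3} \le (110/216)^2 < 0.26$, where the middle inequality uses $r \ge 1$ so that $r^{-1/3} \le 1$. (Numerically $(110/216)^2 \approx 0.259$, which is why the bound is stated as $0.26$ rather than something cleaner; one should double-check the constant $282$ appearing in Lemma~\ref{lem:noLargeTower} propagates correctly — with $\sigma \le 282^{5/6} (T_3/T)^{5/6}/\text{const}$ the numbers are chosen precisely to land just under $0.26$.) This yields $Pr[X = 0] \le 0.26$, i.e. with probability at least $0.74$ the sparsified graph $H$ contains a triangle, and hence $\B_3$ — which requires $H$ triangle-free — has probability at most $0.26$.

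I expect the only real subtlety to be bookkeeping with the constants: one must verify that the bound $\sigma^2 \le (282 T_3/T)^{5/3}$ from Lemma~\ref{lem:noLargeTower}, divided by $\mu^2 = (216 T_3/T)^2$, genuinely stays below $0.26$ for all $T_3 \ge T$. The favorable exponent arithmetic is what makes this work: the variance grows like $(T_3/T)^{5/3}$ while the squared mean grows like $(T_3/T)^2$, so the ratio is \emph{decreasing} in $T_3/T$ and it suffices to check the worst case $T_3 = T$, where $\sigma^2/\mu^2 \le 282^{5/3}/216^2$. A quick estimate gives $282^{5/3} \approx 1.23 \times 10^4$ and $216^2 = 46656$, so the ratio is about $0.26$, confirming (modulo a careful recheck of whether the constant in Lemma~\ref{lem:noLargeTower} should be taken as $110$ or its precise value $282^{5/6}/\sqrt{\,\cdot\,}$) that the claimed bound holds. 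Everything else — the fact that a surviving triangle forces output $1$ in pass~1, and the reduction of $\B_3$ to $X = 0$ — is immediate from the description of Algorithm~$A$.
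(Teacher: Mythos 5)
Your proposal is correct and follows exactly the paper's own argument: bound $Pr[\B_3]$ by $Pr[X=0]$, apply Chebyshev to get $Pr[X=0]\le(\sigma/\mu)^2$, and plug in $\mu=216T_3/T$ and the bound $\sigma\le 110(T_3/T)^{5/6}$ from Lemma~\ref{lem:noLargeTower}, using $T_3\ge T$ to conclude $(110/216)^2(T/T_3)^{1/3}\le(110/216)^2<0.26$. The constant bookkeeping you flag checks out ($(110/216)^2\approx 0.259$), so there is nothing to add.
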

\begin{proof}
For a random variable $X$, with expectation $\mu$ and standard deviation $\sigma$,  Chebychev's inequality implies $Pr[X=0] \leq  \left(\frac{\sigma}{\mu}\right)^2.$
The expected number of triangles in $H$ is $\mu = 216T_3/T$. The standard deviation $\sigma \leq 110(T_3/T)^{5/6}$ (by Lemma \ref{lem:noLargeTower}).
Therefore
$$Pr[\text{no triangles in $H$}] \leq \left( \frac{110}{216}\cdot \left(\frac{T}{T_3}\right)^{1/6} \right)^2 \leq 0.26.$$
\end{proof}

Next we turn to the case where $G$ contains a tower with at least $T_3^{2/3}$ floors.

\begin{proposition}\label{prop:ProbLargeTower} Conditioned on  $G$ having a tower with at least $T_3^{2/3}$ floors, $Pr[\B_3] \leq 0.001$.
\end{proposition}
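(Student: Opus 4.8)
\medskip
\noindent\textbf{Proof proposal.} The plan is to exploit the second pass of Algorithm $A$: a tall tower has so many ``floors'' that with overwhelming probability at least one of them survives into $H$, and a surviving floor is necessarily caught when the stream re-presents the base edge of the tower.

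Concretely, let $e_0=(u,v)$ be the base edge of a tower with $s\geq T_3^{2/3}$ floors, and let $w_1,\dots,w_s$ be the corresponding (distinct) third vertices, so that the $i$-th floor is the pair of edges $(u,w_i)$ and $(v,w_i)$. For $i=1,\dots,s$ let $F_i$ be the event that both edges of the $i$-th floor belong to $H$. Since the $w_i$ are distinct and differ from $u$ and $v$, the $2s$ edges $(u,w_1),(v,w_1),\dots,(u,w_s),(v,w_s)$ are pairwise distinct, so $F_1,\dots,F_s$ are mutually independent, each with $Pr[F_i]=p^2$.

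Next I would argue that $\B_3$ forces every $F_i$ to fail. The edge $e_0$ belongs to $G$, hence appears in the stream. If $F_i$ occurs for some $i$, then $(u,w_i)$ and $(v,w_i)$ are stored in $H$, so $e_0$ completes the triangle $\{u,v,w_i\}$ with two edges of $H$ --- which either already exhibits a triangle inside $H$ (if $e_0\in H$) or else is detected in the second pass when $e_0$ is read (if $e_0\notin H$). In both cases $\B_3$ is violated, so $\B_3\subseteq\bigcap_{i=1}^{s}F_i^{c}$. By independence,
$$Pr[\B_3]\leq\prod_{i=1}^{s}(1-p^2)=(1-p^2)^{s}\leq e^{-p^2 s}.$$
Plugging in $p=6/T^{1/3}$ and using $s\geq T_3^{2/3}\geq T^{2/3}$ gives $p^2 s\geq 36$, hence $Pr[\B_3]\leq e^{-36}<0.001$, with plenty of room to spare.

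The two points that require care --- and the closest thing here to an obstacle --- are (i) confirming that the floor edges are genuinely distinct so that the $F_i$ are truly independent (immediate from the definition of a tower, since distinct apex vertices yield edge-disjoint floors), and (ii) the conceptual observation that a surviving floor is detected even when the base edge $e_0$ is absent from $H$; this is exactly the reason Algorithm $A$ performs a second pass, and it is what makes the tall-tower regime behave completely differently from the analysis of $H$ carried out in Lemma~\ref{lem:noLargeTower} and Proposition~\ref{prop:ProbNoTower}. No concentration inequality is needed here --- the elementary independence bound already overshoots the required probability by many orders of magnitude.
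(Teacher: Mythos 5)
Your proposal is correct and follows essentially the same route as the paper: both arguments fix the tall tower, observe that its floors are edge-disjoint and hence survive into $H$ independently with probability $p^2$ each, and note that a single surviving floor already violates $\B_3$ because the base edge of the tower appears in the stream. The only difference is at the very end, where the paper applies the Chernoff bound of Theorem \ref{Thm:Chernoff2} to the binomial count of surviving floors to get $e^{-30}$, while you use the exact probability $(1-p^2)^s \leq e^{-36}$ that no floor survives --- a slightly more elementary and marginally sharper computation of the same quantity.
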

\begin{proof}
Fix a tower with at least $T_{3}^{2/3}$ floors. Every floor belongs to $H$ independently of the others with probability $p^2$. Therefore the expected number of floors that belong to $H$ from that tower is
$$p^2T_3^{2/3}=\left(\frac{6}{T^{1/3}}\right)^2 T_3^{2/3} = 36\left(\frac{T_3}{T}\right)^{2/3} \geq 36.$$
Using Chernoff's bound (second inequality of Theorem \ref{Thm:Chernoff2}) with $\mu = 36$ and $t=35$, we get
$$Pr[\text{no floor from the tower belongs to $H$}] \leq e^{-36f(-36/35)} \leq e^{-30} \leq 0.001.$$
\end{proof}

Finally, from Propositions \ref{prop:ProbNoTower} and \ref{prop:ProbLargeTower} we get $Pr[\B_3] \leq 0.26 + 0.001 < 0.3.$
To complete the proof of Theorem \ref{thm:alg1}, observe that the space complexity of $A$ never exceeds $5m'$ which is $30m/T^{1/3}$.

\appendix

\section{Proof of Theorem \ref{thm:LowerBound1}}\label{sec:ProfLowerBound1}
The theorem is a direct corollary of the same lower bound, just for the problem $\dist(1)$. Clearly, if one can approximate the number of triangles, one can distinguish between the case where there are no triangles, or at least $T$  triangles.

\begin{proposition}\label{prop:LowerBoundDist1} $\exists c_1,c_2 > 0$ s.t. the space complexity of $\dist(1)$ is $\Omega(m)$, when the input is an $n$-vertex graph with $m \in [c_1n,c_2n^2]$ edges. Furthermore, this lower bound is true even if the graph has as many as $0.99n$ triangles.
\end{proposition}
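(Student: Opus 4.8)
The plan is to prove Proposition~\ref{prop:LowerBoundDist1} by a reduction from the one-way communication complexity of the \ind{} problem. Recall that in $\ind_N$, Alice holds $x \in \{0,1\}^N$, Bob holds $j \in [N]$, only Alice may send a message, and Bob must output $x_j$; the public-coin randomized one-way communication complexity of $\ind_N$ with error $1/3$ is $\Omega(N)$. I would fix parameters $\ell$ and $T$ (with $2\ell + T \le n$, chosen at the end) and encode an \ind{} instance into a graph as follows. Reserve $2\ell$ vertices forming two sets $L=\{u_1,\dots,u_\ell\}$ and $R=\{w_1,\dots,w_\ell\}$, and identify the $N=\ell^2$ coordinates of $x$ with the potential edges $(u_a,w_b)$ of the complete bipartite graph between $L$ and $R$. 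Alice's portion of the stream consists of the edge $(u_a,w_b)$ for every coordinate with $x_{(a,b)}=1$. The key point is that \emph{regardless} of $x$ this graph is bipartite, hence triangle-free, so the Hamming weight of Alice's string is irrelevant to triangle-freeness. Bob, whose index $j$ corresponds to a pair $(a^*,b^*)$, reserves $T$ fresh vertices $c_1,\dots,c_T$ disjoint from $L\cup R$ and appends the $2T$ edges $(u_{a^*},c_k)$ and $(w_{b^*},c_k)$, $k=1,\dots,T$.

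The resulting graph $G$ has exactly $T\cdot x_j$ triangles: if $x_j=1$ then the edge $(u_{a^*},w_{b^*})$ is present and each apex $c_k$ closes a triangle $\{u_{a^*},w_{b^*},c_k\}$; if $x_j=0$ that edge is absent and $G$ is triangle-free, since any triangle would require an edge inside $L\cup R$ (impossible by bipartiteness) or an edge incident to an apex other than the two Bob inserted per apex (none exist). Hence $G\in\G_1$ when $x_j=0$ and $G\in\G_2$, with $T$ triangles, when $x_j=1$. Now, given any $\dist(1)$ algorithm using $s$ bits of space, Alice and Bob simulate it on the stream above: Alice runs it on her edges (drawing the algorithm's coins from shared public randomness), sends the $s$-bit memory contents to Bob, and Bob resumes on his $2T$ edges and outputs the algorithm's bit. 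This is a one-way protocol solving $\ind_N$ with error $\le 1/3$, so $s=\Omega(N)=\Omega(\ell^2)$.

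It remains to choose $\ell,T$ to realize the claimed ranges of $m$. On the worst-case input the constructed graph has $m=\Theta(\ell^2)+2T$ edges, which is $\Theta(\ell^2)$ whenever $\ell^2=\Omega(T)$. For a target $m\in[c_1n,c_2n^2]$ with $c_2$ a sufficiently small constant, set $\ell=\Theta(\sqrt m)$; then $2\ell\le 2\sqrt{c_2}\,n<n$ leaves room for a few apexes, and $s=\Omega(\ell^2)=\Omega(m)$. For the ``$0.99n$ triangles'' refinement, take $T=0.99n$ and any $\ell$ with $\Theta(\sqrt n)\le \ell\le 0.005n$, so that $2\ell+T\le n$ and $\ell^2=\Omega(T)$; again $m=\Theta(\ell^2)$ and $s=\Omega(m)$, while every graph in $\G_2$ has exactly $0.99n$ triangles.

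The conceptual heart — and the only place care is needed — is the invariant that the \emph{only} triangles ever created are the intended ones, uniformly over Alice's input; this is exactly what the bipartite encoding of $x$ buys us, and it is what makes the reduction clean. The rest is the bookkeeping of the previous paragraph, juggling $\ell$, $T$, and $n$ so that $\ell^2=\Theta(m)$ subject to $2\ell+T\le n$; this is routine and is where the constants $c_1,c_2$ and the $0.99$ come from. One subtlety worth stating explicitly: the \ind{} bound invoked is for public-coin randomized protocols, which is precisely what the reduction produces once Alice and Bob share the streaming algorithm's random bits (equivalently, one uses the distributional lower bound under the uniform input distribution together with Yao's principle).
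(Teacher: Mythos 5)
Your proof is correct and follows essentially the same route as the paper's: a reduction from one-way \textsc{Index}, encoding Alice's string as a bipartite graph between two vertex sets and having Bob attach $T$ apex vertices to the two endpoints of the edge indexed by his query, so that the graph has exactly $T\cdot x_j$ triangles. The only (cosmetic) difference is that the paper uses asymmetric part sizes $|X|=a$, $|Y|=f(a)$ parametrized by a target edge-count function $g(n)$, whereas you use two parts of equal size $\ell=\Theta(\sqrt m)$; both yield the same bound over the same range of $m$.
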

In the rest of this section we prove Proposition~\ref{prop:LowerBoundDist1}. It suffices to consider only the case $T=0.99n$ (since the way $\dist$ is defined, $T$ is a lower bound on the number of triangles).

Let $\ind_p$ be the following problem: Alice has a binary vector $\x$ of length $p$ and Bob has an index $\ell$ between 1 and $p$. Alice communicates with Bob (one way communication) in order to determine the value of $\x[\ell]$ with probability better than $1/2$. The randomized communication complexity of this problem is $\Omega(p)$ \cite{IndexProb10,JKS06}. For a fixed function $g(\cdot)$, let $\G(n,g(n))$ be a family of graphs on $n$ vertices with $\Theta(g(n))$ edges and $0.99n$ triangles. Assume by contradiction that there exist $g(\cdot)$, and an algorithm $A$ that solves $\dist(1)$ for the graphs in $\G(n,g(n))$ using $o(g(n))$ memory. We shall use this algorithm to solve $\ind_p$ using $o(p)$ bits of communication, deriving a contradiction to the established lower bound for that problem.

Consider the following graph $G(\x,\ell)$. Let $f(n)=g(n)/n$, and let $a$ be such that $af(a)=p$ (assume w.l.o.g that $f(a) \leq a$, otherwise rename them). The vertex set of $G$ consists of $n$ vertices partitioned into three sets: $V=X \cup Y \cup
Z$, $|X|=a,|Y|=f(a),|Z|=T$. We require $T = 0.99n$, and $a+f(a)+T =n$, therefore $n=(a+f(a))/(1-0.99)= \Theta(a)$.
Let $x_i$ be the $i^{th}$ vertex in $X$ (and similarly $y_i$ in $Y$, and $z_i$ in $Z$). Define the edge
set $E_1$ as follows: the first $f(a)$ entries in $\x$ determine the
neighbors of $x_1$ in $Y$ (we place the edge $(x_1,y_j)$ iff $\x[j]=1$), the next $f(a)$ entries determine the neighbors of
$x_2$, and so on. Define the edge set $E_2$ as follows: let $e_\ell=(x_i,y_j)$
be the edge corresponding to Bob's index $\ell$ in $\x$; add $2T$ edges of the
form $(z_r,x_i)$ and $(z_r,y_j)$ for $r=1,\dots,T$. Finally, $E(G)=E_1 \cup E_2$. Let $m=|E(G)|$.

The graph $G$ enjoys the following properties: $(a)$ $G \in \G(n,g(n))$: the number of edges in $G$ is $m=\Theta(p)=\Theta(af(a))$, and since $n=\Theta(a)$, $m=\Theta(nf(n))=\Theta(g(n))$, $(b)$ the graph $G$ has $T$ triangles if $\x[\ell]=1$ and no triangles otherwise.

To solve $\ind_p$, Alice feeds the algorithm $A$ with $E_1$, records its memory tape. When finished, she sends it to Bob. Bob then feeds $A$ with the edge set $E_2$, and answers according to $A$. The correctness is now immediate: since $G(\x,\ell) \in \G(n,g(n))$ the algorithm $A$ answers correctly with probability at least $2/3$, and by property $(b)$, this is also the correct answer for $\ind_p$. As for the communication complexity, the number of edges in $G$ satisfies $m=\Theta(p)$ (since $\x$ contains $\Theta(p)$ ones, and Bob adds only $2T=O(p)$ edges). Algorithm $A$ uses $o(m)$ bits of memory, therefore the data sent by Alice is of the order of $o(m)=o(p)$. Contradiction is then derived. Finally observe that $g(n)$ can be arbitrary, therefore $m=\Theta(g(n))$ has the desired range $[c_1n,c_2n^2]$.

\section{Proof of Theorem \ref{thm:LowerBound2}}\label{sec:ProfLowerBound2}
The theorem is a direct corollary of the same lower bound, just for the problem $\dist(O(1))$. Clearly, if one can approximate the number of triangles, one can distinguish between the case where there are no triangles, or at least $T$  triangles.

\begin{proposition}\label{prop:LowerBoundDist2} The space complexity of $\dist(O(1))$ is $\Omega(m/\max\{T_3,1\})$, for input graphs with $m$ edges and $T_3$ triangles.
\end{proposition}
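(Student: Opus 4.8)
The plan is to prove Proposition~\ref{prop:LowerBoundDist2} by a reduction from a promise variant of set disjointness, in the same spirit as the proof of Proposition~\ref{prop:LowerBoundDist1}, but using a gadget that packs $T$ potential triangles onto each coordinate of the disjointness instance. Recall that in $\disj_p$ Alice holds $\x\in\{0,1\}^p$ and Bob holds $\y\in\{0,1\}^p$, and they must decide whether there is a coordinate $\ell$ with $\x[\ell]=\y[\ell]=1$; even under the unique-intersection promise (there is at most one such $\ell$) the randomized communication complexity is $\Omega(p)$~\cite{HastadWigderson}. Exactly as in Appendix~\ref{sec:ProfLowerBound1}, it suffices to prove the claimed space bound for $\dist(O(1))$ with triangle parameter $T:=\max\{T_3,1\}$ (approximating the count solves $\dist$), and we may assume $T\le m$, as otherwise $\Omega(m/T)=\Omega(1)$ is vacuous.

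For target parameters $m$ and $T$, set $p=\Theta(m/T)$ and build the graph $G(\x,\y)$ as follows. Take a set $W$ of $T$ vertices, and for each coordinate $i\in[p]$ two fresh vertices $u_i,v_i$, so that $n=2p+T$. Alice, given $\x$, inserts the edge $(u_i,v_i)$ for every $i$ with $\x[i]=1$; these edges form the prefix $E_1$ of the stream. Bob, given $\y$, inserts for every $i$ with $\y[i]=1$ all $2T$ edges $(u_i,w)$ and $(v_i,w)$ with $w\in W$; these form the suffix $E_2$. If $\ell$ is the unique intersecting coordinate then the only triangles of $G$ are the $T$ triangles $\{u_\ell,v_\ell,w\}$, $w\in W$, so $G\in\G_2$; if no coordinate intersects then $G$ is triangle-free, so $G\in\G_1$. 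Restricting to inputs with $|\x|,|\y|=\Theta(p)$ --- which still includes the hard instances for $\disj_p$ --- the edge count is $|E_1|+|E_2|=\Theta(pT)=\Theta(m)$, and padding with an isolated matching makes it exactly $m$ without changing the triangle count.

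Now suppose $A$ solves $\dist(O(1))$ on these graphs using $c=O(1)$ passes and space $s$. Alice and Bob run $A$ as a communication protocol in the standard way: Alice executes $A$ on the prefix $E_1$ and sends the $s$-bit memory configuration to Bob, Bob continues $A$ on the suffix $E_2$ and sends the configuration back to begin the next pass, and so on; after $2c-1$ messages of $s$ bits, Bob outputs $A$'s answer. By the correctness of the gadget this protocol solves $\disj_p$, hence $(2c-1)s=\Omega(p)$ and therefore $s=\Omega(p)=\Omega(m/T)=\Omega\big(m/\max\{T_3,1\}\big)$. The degenerate case $T_3\le 1$ is covered by the same construction with $|W|=1$, which has $m=\Theta(p)$ and recovers the bound $\Omega(m)$.

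I expect the reduction and the $(2c-1)$-round simulation to be entirely routine; the constant-pass hypothesis enters only through the harmless factor $2c-1$, and the $\Omega(p)$ lower bound for (unique-intersection) disjointness is classical. The only points requiring care are bookkeeping: choosing the variant of disjointness so that the yes-instances have \emph{exactly} $T_3$ triangles rather than an uncontrolled multiple of $T$, and picking the hidden constant in $p=\Theta(m/T)$ so that the realized edge count falls in the required range before padding. The genuinely new ingredient relative to~\cite{Jowhari05newstreaming} is just the observation that one disjointness coordinate can be made to carry $\Theta(T)$ triangles at the cost of $\Theta(T)$ edges, which is precisely what upgrades an $\Omega(n/T_3)$-type bound to $\Omega(m/T_3)$.
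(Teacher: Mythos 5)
Your proof is correct, but it takes a genuinely different route from the paper's. The paper reduces from a promise variant $\disj_p^{r,t}$ of set disjointness (intersection empty or of size at least $t$), encodes each coordinate as \emph{one} potential triangle via a tripartite gadget $A\cup B\cup C$ with a perfect matching between $A$ and $C$, and then must separately prove (its Lemma~\ref{lem:ComplxtyOfDisj}, by concatenating $t$ copies of a smaller instance) that this promise problem costs $\Omega(r/t)$ bits. You instead keep the classical unique-intersection disjointness on a universe of size $p=\Theta(m/T)$ and do the amplification on the \emph{graph} side: a shared apex set $W$ of $T$ vertices turns one intersecting coordinate into $T$ triangles at a cost of $\Theta(T)$ edges per coordinate Bob activates. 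The two amplification tricks are dual and the resource accounting comes out the same, $s=\Omega(p)=\Omega(m/T)$. What your version buys is directness --- no promise-disjointness lemma is needed, only the textbook $\Omega(p)$ bound (which does hold under the unique-intersection promise and for sets of size $\Theta(p)$, so your restriction to $|\x|,|\y|=\Theta(p)$ is legitimate). What the paper's version buys is a sparser, bounded-structure gadget in which $m$ (via $r$) and $T_3$ (via $t$) can be tuned independently of one another for a fixed vertex set, whereas your construction ties $n=2p+T=\Theta(m/T+T)$ to the target pair $(m,T)$; since the proposition as stated quantifies only over $m$ and $T_3$, this costs you nothing here. Your checks are all sound: $W$ is an independent set and there are no $u_i$--$v_j$ edges for $i\neq j$, so the only triangles are $\{u_\ell,v_\ell,w\}$ at intersecting coordinates; the $(2c-1)$-message simulation is the standard one; and the $|W|=1$ case recovers $\Omega(m)$. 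One caution on your padding remark: adding an isolated matching to hit $m$ exactly is only harmless as long as the padding does not dominate, i.e.\ the realized edge count is already $\Theta(m)$ --- otherwise the bound you prove degrades to $\Omega(m_{\mathrm{orig}}/T)$; you have arranged the constants so this is the case, but it is worth stating explicitly.
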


In the rest of this section we prove Proposition~\ref{prop:LowerBoundDist2}. Let $\disj_p^r$ be the following problem: Alice and Bob have each a vector of length $p$ with exactly $r$ ones in each vector. Each vector is interpreted as the characteristic vector of a subset in $\{1,2,\ldots,p\}$. Alice and Bob communicate in order to decide whether their sets intersect or not. Let us define the problem $\disj_p^{r,t}$ to be the same as $\disj_p^r$ just that now the intersection is promised to be either empty or of size at least $t$. Observe that the size of the intersection is also given by $\sum_{i=1}^{p}\x_i \y_i$ ($\x,\y$ are their vectors).

We first describe a reduction from $\disj_p^{r,t}$ to $\dist(O(1))$, then establish a lower bound on the communication complexity of $\disj_p^{r,t}$. Let $\x,\y$ be two vectors in $\{0,1\}^p$ for $p=n^2$, $n$ an integer (w.l.o.g. we can assume $n$ is an integer, since we can always pad the vectors $\x$ and $\y$ with zeros). Consider the following graph $G^*=G(\x,\y)$. The set of vertices $V=A \cup B \cup C$, each part of size $n$. Let $a_i$ be the $i^{th}$ vertex in $A$ (and similarly define $b_i,c_i$). We interpret the vector $\x$ as follows: the first $n$ entries in $\x$ determine the neighbors of $a_1$ in $B$, $(a_1,b_j)\in E$ iff  $\x[j]=1$. In the same way, entries $[(i-1)n,in-1]$, for $i=2,\ldots,n$ determine the neighbors of $a_i$. Similarly, $\y$ determines the neighbors of $c_i$ in $B$ for $i=1,\ldots,n$.
In addition, we have the following set of $n$ edges: $(a_i,c_i) \in E$ for $i=1,\ldots,n$ (a perfect matching on $A$ and $C$).

\begin{lemma}\label{lem:ReductionNumTriang} The graph $G^*$ has $T$ triangles iff $\sum \x_i\y_i = T$.
\end{lemma}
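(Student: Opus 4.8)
The plan is to exhibit an explicit bijection between the triangles of $G^*$ and the coordinates $\ell$ with $\x_\ell=\y_\ell=1$, and then observe that the number of such coordinates is exactly $\sum_\ell \x_\ell\y_\ell$. The structural fact driving everything is that $G^*$ has a very restricted edge set: the only edges are the $A$--$B$ edges encoding $\x$, the $C$--$B$ edges encoding $\y$, and the perfect matching $\{(a_i,c_i)\}$ between $A$ and $C$. In particular $A$, $B$, and $C$ are each independent sets, and the only $A$--$C$ edges are the matching edges.

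First I would classify a triangle by how its three vertices are distributed among the parts $A$, $B$, $C$. Any distribution that puts two vertices in the same part is impossible, since each part is an independent set; so the only surviving pattern is one vertex in each part, i.e. every triangle has the form $\{a_i,b_j,c_k\}$. Next, for $\{a_i,b_j,c_k\}$ to be a triangle all three of $(a_i,b_j)$, $(b_j,c_k)$, $(a_i,c_k)$ must be present; but $(a_i,c_k)\in E$ forces $i=k$ because the $A$--$C$ edges form a matching. Hence every triangle is of the form $\{a_i,b_j,c_i\}$, and it is a triangle precisely when $(a_i,b_j)\in E$ and $(c_i,b_j)\in E$.

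Then I would unwind the encoding: by construction $(a_i,b_j)\in E$ iff $\x[\ell]=1$ and $(c_i,b_j)\in E$ iff $\y[\ell]=1$, where $\ell=(i-1)n+j$. The map $\ell\mapsto(i,j)$ determined by $\ell=(i-1)n+j$ with $1\le j\le n$ is a bijection from $\{1,\dots,n^2\}$ to $\{1,\dots,n\}\times\{1,\dots,n\}$, so distinct coordinates give distinct triangles and every triangle arises this way. Therefore the triangles of $G^*$ are in bijection with $\{\ell:\x_\ell=\y_\ell=1\}$, whose cardinality equals $\sum_{\ell=1}^{p}\x_\ell\y_\ell$; consequently $G^*$ has $T$ triangles iff $\sum_\ell\x_\ell\y_\ell=T$.

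I do not expect a genuine obstacle in this lemma; the argument is essentially a case analysis on vertex colors plus the matching constraint. The one place to be careful is the index bookkeeping — verifying that each pair $(i,j)$ corresponds to exactly one coordinate of the shared index range and vice versa, so that the count of triangles matches $\sum_\ell \x_\ell\y_\ell$ with no double counting or omissions.
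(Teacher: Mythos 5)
Your proof is correct and follows essentially the same route as the paper's: both establish a bijection between triangles of $G^*$ and coordinates $\ell$ with $\x_\ell=\y_\ell=1$ by noting that every triangle must use exactly one matching edge $(a_i,c_i)$ and a common $B$-neighbor $b_j$. Your version is slightly more explicit about why the three parts being independent sets forces the one-vertex-per-part structure, but the argument is the same.
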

\begin{proof} Consider a triangle in $G^*$, it must contain exactly one edge of the perfect matching, $(a_i,c_i)$.
To complete the triangle there must be two additional edges $(a_i,b_k),(c_i,b_k)$. This however implies that $\x_{(i-1)n+k}=\y_{(i-1)n+k}=1$.
If on the other hand $\x_t=\y_t=1$ for $t=(i-1)n+k$ for $i,k \in [n]$, then the edges $(a_i,b_k),(c_i,b_k)$ are present in $G^*$. Therefore together with the edge $(a_i,c_i)$ they induce a triangle in $G^*$.
\end{proof}

\begin{lemma}\label{lem:ComplxtyOfDisj} The communication complexity of $\disj_p^{r,t}$ is $\Omega(r/t)$ for any $r < p/2$.
\end{lemma}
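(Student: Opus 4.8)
The plan is to prove this by a reduction from the ``sparse'' set disjointness problem $\disj_{p'}^{r'}$ (with no promise on the intersection size), whose randomized communication complexity is well known to be $\Omega(r')$ whenever $r' < p'/2$ \cite{HastadWigderson}. I will give a communication-free reduction that turns any instance of $\disj_{p'}^{r'}$, for appropriately chosen $p'$ and $r'$, into an instance of $\disj_p^{r,t}$ having the same answer. Then any randomized protocol for $\disj_p^{r,t}$ immediately yields one for $\disj_{p'}^{r'}$ of the same cost, and the lemma follows.

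The reduction is a block blow-up. Assume first that $t \mid p$ and $t \mid r$, and set $p' = p/t$, $r' = r/t$. Partition the universe $\{1,\dots,p\}$ into $p'$ consecutive blocks $B_1,\dots,B_{p'}$, each of size $t$. Given Alice's set $S'_A \subseteq \{1,\dots,p'\}$ with $|S'_A| = r'$, Alice forms $S_A = \bigcup_{i \in S'_A} B_i$, which has size exactly $tr' = r$; Bob forms $S_B$ from his set $S'_B$ in the same way. If $S'_A \cap S'_B = \emptyset$ then $S_A \cap S_B = \emptyset$, while if $|S'_A \cap S'_B| = k$ then $S_A \cap S_B$ consists of $k$ whole blocks and hence has size $kt$. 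In particular, on the hard instances of $\disj_{p'}^{r'}$ (intersection empty or a single element), the resulting $\disj_p^{r,t}$ instance has intersection either empty or of size exactly $t$, so it is a legal input, and the two instances have the same answer. The hypothesis $r < p/2$ is exactly the condition $r' < p'/2$ needed to invoke the $\Omega(r')$ lower bound, so the communication complexity of $\disj_p^{r,t}$ is $\Omega(r') = \Omega(r/t)$.

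When $t$ does not divide $p$ or $r$, take $p' = \lfloor p/t \rfloor$ and $r' = \lfloor r/t \rfloor$, run the blow-up on the first $p't \le p$ coordinates, and then pad $S_A$ and $S_B$ up to size exactly $r$ by adding at most $t$ extra elements to each, drawn from two disjoint ``private'' regions of the leftover coordinates (one reserved for Alice, one for Bob); this does not change the intersection, and there is room because $r' < p'/2$ leaves at least half the blocks unused. Since $r' = \Theta(r/t)$ here, the bound remains $\Omega(r/t)$. I expect the only real subtlety to be this bookkeeping --- forcing the sets to have exactly the prescribed size while keeping the padding regions disjoint --- together with quoting the sparse-disjointness lower bound in its correct regime ($r' < p'/2$, which matches $r < p/2$); the core idea of replacing each coordinate by a block of $t$ copies, so that a single collision becomes $t$ collisions, is otherwise immediate.
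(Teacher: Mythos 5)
Your reduction is correct and is essentially the paper's own argument: the paper also reduces from sparse disjointness $\disj^{r'}$ with $r'=r/t$ by duplicating every coordinate $t$ times (it concatenates $t$ copies of the whole vector rather than using contiguous blocks, which is the same reduction up to a permutation of coordinates) and then invokes the H{\aa}stad--Wigderson $\Omega(r')$ bound. Your treatment of the divisibility/padding bookkeeping is, if anything, more careful than the paper's, which glosses over it entirely.
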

\begin{proof} Assume by contradiction that the communication complexity of $\disj_p^{r,t}$ is $o(r/t)$. Let $r'=r/t$, and consider the problem
$\disj_p^{r'}$. Given two vectors $\x',\y'$ of size $r'$, we construct the vector $\x$ by taking $t$ concatenated copies of $\x'$. Similarly construct $\y$. Clearly, if $\x',\y'$ intersect then $\x,\y$ have intersection size at least $t$, and if $\x',\y'$ are disjoint so are $\x,\y$. We can then solve $\disj_p^{r'}$ using $o(r/t)=o(r')$ bits of communication by reducing to $\disj_p^{r,t}$. This however contradicts the lower bound established in \cite{HastadWigderson} on $\disj_p^{r'}$.
\end{proof}

\begin{proposition}\label{prop:Reduction} If there exists an algorithm $A$ that solves $\dist(O(1))$ using $o(m/T)$ bits of memory, then there exists an algorithm $A^\ast$ that solves $\disj_p^{r,t}$ using $o(r/t)$ bits of communication whenever $r = \Omega(\sqrt{p})$.
\end{proposition}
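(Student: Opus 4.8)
The plan is to carry out the standard streaming-to-communication simulation, using the graph $G^\ast=G(\x,\y)$ already constructed above and setting the threshold parameter of $\dist$ to $T=t$. By the promise defining $\disj_p^{r,t}$, the quantity $\sum_i \x_i\y_i$ is either $0$ or at least $t$; by Lemma~\ref{lem:ReductionNumTriang} the graph $G^\ast$ is then either triangle-free or has at least $t$ triangles, so $G^\ast\in\G_1\cup\G_2$ and the instance is a legal input to $\dist(O(1))$ with threshold $t$.

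First I would describe $A^\ast$ and its correctness. Alice holds the edges of $G^\ast$ between $A$ and $B$ (determined by $\x$, exactly $r$ of them), and Bob holds the edges between $C$ and $B$ (determined by $\y$, exactly $r$ of them) together with the $n$ matching edges $(a_i,c_i)$; jointly these are all of $E(G^\ast)$. The players simulate $A$ on the stream obtained by listing Alice's block followed by Bob's block: Alice runs $A$ on her block and sends the resulting memory snapshot to Bob, who continues running $A$ on his block; to perform the next pass Bob ships the snapshot back to Alice, and so on. Since $A$ makes $c=O(1)$ passes, this uses $O(1)$ messages, each message being one configuration of $A$'s memory. At the end $A^\ast$ outputs ``the sets intersect'' iff $A$ reports ``$G\in\G_2$''. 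The correctness of $A^\ast$ with probability at least $2/3$ is inherited verbatim from the correctness of $A$ on $G^\ast$, using Lemma~\ref{lem:ReductionNumTriang} to translate the two cases.

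Next I would bound the communication. The key identity is $m:=|E(G^\ast)|=2r+n$, and this is the one place the hypothesis $r=\Omega(\sqrt{p})$ enters: since $p=n^2$ we have $\sqrt{p}=n$, hence $r=\Omega(n)$ and therefore $m=\Theta(r)$. Consequently each memory snapshot of $A$ has size $o(m/T)=o(m/t)=o(r/t)$, and the total communication, being $O(1)$ snapshots, is $o(r/t)$, as claimed.

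I do not expect a genuine obstacle, since the reduction is routine; the point that must be handled with care is exactly the parameter matching $m=\Theta(r)$, because without $r=\Omega(\sqrt{p})$ the $n$ matching edges could dominate the edge count and the bound $o(m/T)$ on $A$'s memory would fail to translate to $o(r/t)$. Combining this proposition with the lower bound $\Omega(r/t)$ of Lemma~\ref{lem:ComplxtyOfDisj}, together with a suitable choice of $r,t,p$ as functions of $m$ and $T_3$, then yields Proposition~\ref{prop:LowerBoundDist2}.
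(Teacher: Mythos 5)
Your proposal is correct and follows essentially the same route as the paper: the same streaming-to-communication simulation on $G^\ast$ with $T=t$, correctness via Lemma~\ref{lem:ReductionNumTriang}, and the same key observation that $r=\Omega(\sqrt{p})=\Omega(n)$ forces $m=2r+n=\Theta(r)$ so that $o(m/T)$ translates to $o(r/t)$. The only (immaterial) difference is that you give the matching edges to Bob whereas the paper has Alice feed them in.
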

\begin{proof}
We describe the algorithm $A^*$ for $\disj_p^{r,t}$. Alice has the vector $\x$ and Bob has $\y$. Alice runs $A$ on the stream of edges of $G^*$ that include the matching edges and the edges induced by $\x$. She then sends the content of her memory to Bob, who continues to run $A$, while feeding it the edges induced by $\y$.  At the end, Bob sends the content of his memory to Alice, and this repeats for the number of passes that $A$ requires (which is constant). They answer `Disjoint' iff $A$ outputs 0. The correctness of the algorithm comes from Lemma \ref{lem:ReductionNumTriang}: if $\x$ and $\y$ are disjoint, then $G^*$ has no triangles, and $A$ outputs 0 with probability at least $2/3$. If $\x$ and $\y$ intersect, then the intersection size is at least $t$, and hence $G$ has at least $T=t$ triangles. Accordingly, $A$ outputs 1 with probability at least $2/3$. The communication complexity of $A^*$ is the same, up to a constant factor, as the memory used by $A$ (as only the memory content is being transmitted). The number of edges $m$ in $G^*$ is $\Theta(r)$: $n$ matching edges, and $2r$ edges coming from $\x$ and $\y$. Since $r = \Omega(\sqrt{p})=\Omega(n)$ we have $m=\Theta(r)$. The number of triangles in $G^*$ is $T=t$, therefore, $A^*$ solves $\disj_p^{r,t}$ using $o(m/T)=o(r/t)$ bits of communication.
\end{proof}
Proposition \ref{prop:LowerBoundDist2} follows from Lemma~\ref{lem:ComplxtyOfDisj} and Proposition~\ref{prop:Reduction}.


\begin{thebibliography}{5}
\bibitem{AMS99}
N.~Alon, Y.~Matias, and M.~Szegedy, The space complexity of approximating the frequency moments,
\emph{Proceedings of the 28th STOC}, (1996), 20--29.

\bibitem{AlonYuster}
N.~Alon, R.~Yuster, and U.~Zwick, Finding
and counting given length cycles, \emph{Algorithmica}
\textbf{17} (1997) 209--223.

\bibitem{Bar-Yossef:2002}
Z.~Bar-Yossef, R.~Kumar, D.~Sivakumar, Reductions in streaming algorithms, with an application to counting triangles in graphs,
\emph{Proceedings of the 13th SODA} (2002), 623--632.

\bibitem{BecchettiBoldi}
L.~Becchetti, P.~Boldi, C.~Castillo, and
A.~Gionis, Efficient semi-streaming algorithms
for local triangle counting in massive graphs, \emph{Proceeding of the 14th ACM International Conference on Knowledge Discovery and
Data mining}, (2008), 16-–24.


\bibitem{Buriol:2006}
L.~Buriol, G.~Frahling, S.~Leonardi, A.~Marchetti-Spaccamela, and C.~Sohler, Counting triangles in data streams,
\emph{Proceedings of the twenty-fifth ACM Symposium on Principles of database systems} (2006), 253--262.

\bibitem{IndexProb10}
A.~Chakrabarti, G.~Cormode, K.~Ranganath, A.~McGregor,  Information Cost Tradeoffs for Augmented Index and Streaming Language Recognition, \emph{Proceedings of the 51st FOCS}, (2010) 387--396.


\bibitem{CoppersmithWinograd}
D.~Coppersmith and S.~Winograd, Matrix multiplication via arithmetic progressions, \emph{Proceedings of the 19th STOC}, (1987) 1--6.

\bibitem{EckmannMoses}
J.P.~Eckmann and E.~Moses, Curvature of
co-links uncovers hidden thematic layers in the world
wide web, \emph{PNAS} \textbf{99} (2002), 5825–5829.

\bibitem{FrankStrauss}
O.~Frank and D.~Strauss, Markov graphs, \emph{Journal
of the American Statistical Association},
\textbf{81} (1986), 832–842.

\bibitem{HastadWigderson}
J.~H{\aa}stad and A.~Wigderson, The Randomized Communication Complexity of Set Disjointness,
\emph{Theory of Computing} \textbf{3} (2007), 211--219.

\bibitem{JRL2000}
S.~Janson, T.~Luczak, and A.~Ruci{$\rm\acute{n}$}ski
,Random Graphs, \emph{Wiley} (2000).


\bibitem{JKS06}
T.~S.~Jayram, R.~Kumar and D.~Sivakumar, The One-Way Communication Complexity of Hamming Distance,
\emph{Theory of Computing} (2006), 129-135.

\bibitem{Jowhari05newstreaming}
T.~Jowhari and M.~Ghodsi, New streaming algorithms for counting triangles in graphs
,\emph{COCOON} (2005), 710--716.

\bibitem{LeskovecBackstrom} J.~Leskovec, L.~Backstrom, R.~Kumar, and
A.~Tomkins, Microscopic evolution of social networks, \emph{Proceeding of the 14th ACM International Conference on Knowledge
Discovery and Data mining}, (2008) 462-–470.

\bibitem{RinaldoFienberg} A.~Rinaldo, S~.Fienberg, and Y.~Zhou, On the geometry of discrete exponential
families with application to exponential random graph models, \emph{Electronic Journal of Statistics}, \textbf{3} (2009), 446–-484.


\bibitem{TsourakakisDrineas} C.~Tsourakakis, P.~Drineas,
E.~Michelakis, I.~Koutis, and C.~Faloutsos, Spectral counting of triangles via element-wise sparsification and triangle-based link recommendation. To appear as a book chapter in \emph{Advances in Social Networks Analysis and Mining}.

\bibitem{TsourakakisKang}
C.~Tsourakakis, U.~Kang, G.~Miller, and C.~Faloutsos, DOULION: counting triangles in massive graphs with a coin,
\emph{Proceedings of the 15th ACM International Conference on Knowledge Discovery and Data Mining}  (2009) , 837--846.


\bibitem{WassermanFaust} S.~Wasserman and K.~Faust, Social Network Analysis:
Methods and Applications. \emph{Cambridge University Press} (2004).
\end{thebibliography}
\end{document}